\newtheorem{lemma}{Lemma}
\def\BibTeX{{\rm B\kern-.05em{\sc i\kern-.025em b}\kern-.08em
    T\kern-.1667em\lower.7ex\hbox{E}\kern-.125emX}}
\begin{document}
\title{Low-Complexity Joint Beamforming for RIS-Assisted MU-MISO Systems Based on Model-Driven Deep Learning}

\author{
{Weijie Jin, \IEEEmembership{Graduate Student Member, IEEE}, Jing Zhang, \IEEEmembership{Member, IEEE}, \\Chao-Kai Wen, \IEEEmembership{Fellow, IEEE}, Shi Jin, \IEEEmembership{Fellow, IEEE},\\ Xiao Li, \IEEEmembership{Member, IEEE}, Shuangfeng Han, \IEEEmembership{Senior Member, IEEE}}
\vspace{-10pt}

\thanks{This article was presented in part at the IEEE ICCT 2023 \cite{jinJointBeamforming2023}.}

\thanks{W. Jin, J. Zhang, S. Jin, and X. Li are with the National Mobile Communications Research Laboratory, Southeast University, Nanjing 210096, China (email: \{jinweijie, jingzhang, jinshi, li\_xiao\}@seu.edu.cn).

C.-K. Wen is with the Institute of Communications Engineering, National Sun Yat-sen University, Kaohsiung 80424, Taiwan (email: chaokai.wen@mail.nsysu.edu.tw).

S. Han is with the China Mobile Research Institute, Beijing 100053, China
(email: hanshuangfeng@chinamobile.com).}
}
\maketitle

\begin{abstract}
    Reconfigurable intelligent surfaces (RIS) can improve signal propagation environments by adjusting the phase of the incident signal. However, optimizing the phase shifts jointly with the beamforming vector at the access point is challenging due to the non-convex objective function and constraints. In this study, we propose an algorithm based on weighted minimum mean square error optimization and power iteration to maximize the weighted sum rate (WSR) of a RIS-assisted downlink multi-user multiple-input single-output system. To further improve performance, a model-driven deep learning (DL) approach is designed, where trainable variables and graph neural networks are introduced to accelerate the convergence of the proposed algorithm. We also extend the proposed method to include beamforming with imperfect channel state information and derive a two-timescale stochastic optimization algorithm. Simulation results show that the proposed algorithm outperforms state-of-the-art algorithms in terms of complexity and WSR. Specifically, the model-driven DL approach has a runtime that is approximately 3\% of the state-of-the-art algorithm to achieve the same performance. Additionally, the proposed algorithm with 2-bit phase shifters outperforms the compared algorithm with continuous phase shift.
\end{abstract}

\begin{IEEEkeywords}
Reconfigurable intelligent surface, WMMSE, power iteration, model-driven deep learning, graph neural network
\end{IEEEkeywords}

\section{Introduction}
\IEEEPARstart{T}{he} pursuit of enhanced spectral efficiency continues to shape the evolution of communication systems. Over the past few decades, wireless communication has experienced remarkable advancements propelled by various technologies, including multiple-input multiple-output (MIMO), orthogonal frequency division multiplexing (OFDM), and millimeter-wave solutions. However, these technologies have primarily focused on elevating data rates through spectrum resource optimization and multiplexing, often without treating the channel as a malleable element in the design process. Traditional physical-layer communication systems typically optimize system performance based on specific channel conditions to achieve higher spectral efficiency. In recent years, research on metasurfaces has made substantial strides, leading to the emergence of reconfigurable intelligent surfaces (RIS) that can dynamically manipulate communication environments. This introduces the channel as a new dimension for design, holding the potential to enhance spectral efficiency while minimizing energy consumption and hardware costs \cite{wuIntelligentReflectingSurfaceAided2021, wuIntelligentReflectingSurface2019, wuBeamformingOptimizationWireless2020, huangReconfigurableIntelligentSurfaces2019, chenActiveIRSAided2023, chenChannelEstimationTraining2023a}.

\subsection{Related Work}
\vspace{7pt}
A RIS comprises numerous low-cost reflective elements that manipulate incident signals through predefined phase shifts. Careful phase shift design within passive beamforming yields an asymptotic power gain on the order of $\mathcal{O}(N^2)$ \cite{wuIntelligentReflectingSurface2019}, considerably more efficient compared to relays with the order of $\mathcal{O}(N)$. Nevertheless, RIS deployment introduces its own set of challenges. Designing joint active beamforming at the access point (AP) and passive beamforming at the RIS is complex due to the non-convex nature of the objective function and associated constraints.
\vspace{7pt}

Early studies primarily delved into joint beamforming in single-user scenarios \cite{wuIntelligentReflectingSurface2018, yuMISOWirelessCommunication2019, yuOptimalBeamformingMISO2020, fengJointBeamformingOptimization2021, fengDeepReinforcementLearning2020}. The joint beamforming task is often decomposed into separate active and passive beamforming designs. In single-user scenarios, maximum ratio transmission is commonly utilized to compute active beamforming. In \cite{wuIntelligentReflectingSurface2018}, passive beamforming was derived using semidefinite relaxation (SDR) to maximize received signal power at user terminals. An alternative approach in \cite{yuMISOWirelessCommunication2019} exploited fixed-point iteration for passive beamforming design, offering improved performance and lower complexity compared to SDR.
\vspace{7pt}

Extending to multi-user scenarios aligns better with wireless communication trends \cite{wuIntelligentReflectingSurface2019, wuBeamformingOptimizationWireless2020, guoWeightedSumRateMaximization2019, guoWeightedSumRateMaximization2020, chenFundamentalLimitsIntelligent2023, chenActiveIRSAided2023a}. Here, the minimum mean square error (MMSE) criterion and SDR method were employed to solve for active and passive beamforming, respectively \cite{wuIntelligentReflectingSurface2019}. To optimize the system's weighted sum rate (WSR), the Lagrangian dual transform was harnessed, with passive beamforming updates obtained via closed-form solutions \cite{guoWeightedSumRateMaximization2019}. Employing fractional programming, \cite{guoWeightedSumRateMaximization2020} broke down joint beamforming into four separate blocks and merged them with the block coordinate descent (BCD) method for obtaining a stationary solution. The capacity region of channels was investigated through a bisection-based framework in \cite{chenFundamentalLimitsIntelligent2023}. However, these methods suffer from compromised performance and high complexity, impeding practical deployment.

Recent years have witnessed increasing integration of deep learning (DL) into physical layer communications, encompassing channel estimation \cite{jinAdaptiveChannelEstimation2021, heBeamspaceChannelEstimation2022}, signal detection \cite{zhouModelDrivenDeepLearningBased2021, zhangMetaLearningBasedMIMO2021}, and channel state information (CSI) feedback \cite{wenDeepLearningMassive2018, guoDeepLearningBasedTwoTimescale2022}. In joint beamforming design, DL has been employed to directly learn the mapping from CSI to beamforming vectors \cite{xuRobustDeepLearningBased2022, huangReconfigurableIntelligentSurface2020}. A neural network composed of fully connected layers was utilized to simultaneously produce active and passive beamforming in \cite{xuRobustDeepLearningBased2022}. Deep reinforcement learning treated the system as an agent, training a deep deterministic policy gradient neural network via trial-and-error interactions with the environment \cite{huangReconfigurableIntelligentSurface2020}. However, these endeavors predominantly rely on data-driven DL, susceptible to performance degradation with changing environments and lacking interpretability. Model-driven DL, incorporating expert knowledge into network design, has shown robustness and interpretability in \cite{jinAdaptiveChannelEstimation2021, heBeamspaceChannelEstimation2022, zhouModelDrivenDeepLearningBased2021}. Yet, as far as our knowledge extends, a model-driven DL-based joint beamforming design for RIS-assisted systems remains to be investigated, despite being an emerging trend.

The aforementioned studies assume perfect CSI availability, which is not always achievable. Addressing performance degradation in imperfect CSI scenarios, bounded and statistical CSI error models were proposed \cite{zhouFrameworkRobustTransmission2020}. In \cite{huRobustSecureSumRate2021}, the authors investigated robust and secure communication aided by a self-sustaining RIS under the bounded CSI error model. To handle challenges in estimating RIS-related channels, \cite{guoWeightedSumRateMaximization2020} introduced a two-timescale transmission structure and utilized stochastic successive convex approximation (SSCA) algorithm \cite{liuOnlineSuccessiveConvex2018} to maximize average WSR. By relaxing the unimodular constraint, the SSCA algorithm was enhanced to accelerate convergence \cite{zhaoIntelligentReflectingSurface2021}. However, the SSCA algorithm necessitates designing a surrogate function to solve the non-convex optimization, which is not an equivalent transformation and might cause performance loss. Certain works circumvented the use of instantaneous CSI in algorithm design, alleviating the burden of channel estimation \cite{hanLargeIntelligentSurfaceAssisted2019, ganRISAssistedMultiUserMISO2021, jiangLearningReflectBeamform2021}. Specifically, statistical CSI was applied to maximize average received signal-to-noise ratio (SNR) in single-user systems \cite{hanLargeIntelligentSurfaceAssisted2019} and ergodic capacity in multi-user systems \cite{ganRISAssistedMultiUserMISO2021}. A graph neural network (GNN) in \cite{jiangLearningReflectBeamform2021} directly learned the mapping from received pilots to beamforming vectors, circumventing the need for channel estimation. However, these works' performance could be further improved with reduced complexity, motivating the design of a low-complexity algorithm and its extension to imperfect CSI scenarios.

\subsection{Contributions and Organization}
In this study, we delve into the intricacies of joint beamforming in RIS-assisted multi-user multi-input single-output (MU-MISO) systems. We present an iterative algorithm and subsequently introduce a model-driven DL-based neural network to further streamline complexity. Moreover, we extend the model-driven DL approach to account for imperfect CSI scenarios, showcasing improved performance relative to existing algorithms. Our contributions are as follows:

\begin{itemize}
  \item \emph{Development of an Iterative Algorithm for RIS-Assisted MU-MISO Systems:} The intricate challenge of joint beamforming is addressed by dissecting it into active and passive beamforming components. This decomposition is achieved by transforming the optimization problem into a weighted MMSE (WMMSE) optimization form. For active beamforming, we derive a closed-form solution, while passive beamforming's unimodular quadratic program is tackled using the power iteration (PI) algorithm. The result is an iterative algorithm grounded in the alternating optimization principle, ensuring convergence.

  \item \emph{Low-Complexity Model-Driven DL with Graph Neural Network:} To further mitigate algorithmic complexity, we fuse model-driven DL into the process to capture crucial parameters linked to convergence speed. We devise a GNN that takes into account the optimal active beamforming structure, enhancing initialization performance. The GNN encapsulates user interactions through its permutation invariant/equivariant properties. Additionally, to accommodate discrete phase shifts, we introduce an approximate quantization function. This integration of DL and algorithmic approaches yields a robust and high-performing neural network.

  \item \emph{Enhancements for Imperfect CSI Scenarios:} Acknowledging the substantial overhead associated with channel estimation, we propose a novel transmission protocol that updates passive beamforming only during a frame's initial slot. The objective is to maximize average WSR over the transmission duration, which necessitates a two-timescale optimization framework. Our model-driven DL method is extended to encompass this proposed transmission protocol, bolstered by the introduction of trainable variables to boost performance. 

  \item \emph{Validation through Comprehensive Simulation:} To demonstrate the effectiveness of our proposed algorithms, we rigorously test them under various conditions. Our proposed algorithm attains comparable performance to state-of-the-art approaches while operating at less than 3\% of the runtime. The extended method displays robustness and holds promise for practical deployment in imperfect CSI scenarios, outperforming the existing two-timescale optimization methods.
\end{itemize}

The subsequent sections are structured as follows: Section \ref{Sec_System_Model_and_Problem_Formulation} introduces the system and channel models, followed by the formulation of the two optimizations for both perfect and imperfect CSI scenarios. In Section \ref{Sec_Proposed_Beamforming_Algorithm}, we effectuate a transformation on the optimization problem under perfect CSI and present an iterative algorithm to solve it. In Section \ref{Sec_Model_Driven}, we introduce model-driven DL augmented by a GNN, addressing complexity and performance concerns. We then expand our method to encompass imperfect CSI scenarios in Section \ref{Sec_ImCSI}. The experimental details and numerical results are provided in Section \ref{Sec_Result}. Finally, our work concludes in Section \ref{Sec_Conclusion}.

\subsection{Notations}
The following notations are used throughout this paper: $a$, $\mathbf{a}$, and $\mathbf{A}$ stand for a scalar, a column vector, and a matrix, respectively. ${\mathbf{a}[i:j]}$ represents the $i$-th to the $j$-th elements of $\mathbf{a}$. $(\mathbf{A})_{i,j}$ is the element on the $i$-th row and $j$-th column of $\mathbf{A}$. The conjugate, transpose and conjugate transpose of $\mathbf{A}$ are represented by $\mathbf{A}^{*}$, $\mathbf{A}^T$, and $\mathbf{A}^H$, respectively. $\|\mathbf{A}\|$ represent the Frobenius norm of $\mathbf{A}$. $\mathbf{A}^{-1}$ is the inverse of $\mathbf{A}$. For the symmetric matrix $\mathbf{A}$ and $\mathbf{B}$, $\mathbf{A} \succeq \mathbf{B}$ signifies that $\mathbf{A} - \mathbf{B}$ is positive semidefinite. $\operatorname{diag}(\mathbf{a})$ returns a diagonal matrix with the input vector as its diagonal. $\operatorname{tanh}(\mathbf{a})$ is the hyperbolic tangent function and is applied to each element of the input. $\operatorname{abs}(\cdot)$ and $\operatorname{angle}(\cdot)$ are the function which returns the amplitude and phase of input, respectively. 

\section{System Model and Problem Formulation}
\label{Sec_System_Model_and_Problem_Formulation}
\begin{figure}[tbp!]
    \centering
    \includegraphics[width=0.45\textwidth]{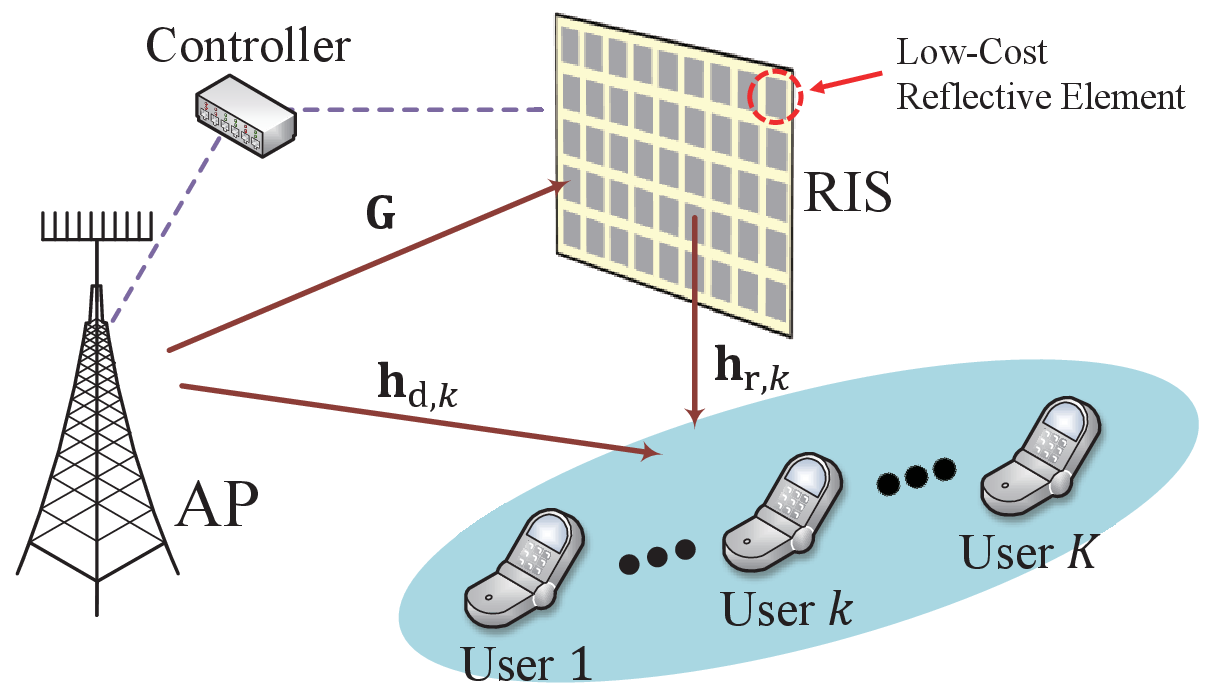}
    \caption{Downlink transmission of the RIS-assisted MU-MISO system.}
    \label{fig_system_model}
\end{figure}

In this section, we begin by providing an overview of the RIS-assisted downlink MU-MISO system. Subsequently, we formulate the joint beamforming problem for both perfect and imperfect CSI scenarios.
\subsection{System Model}
\label{System_Model}
As shown in Fig.~\ref{fig_system_model}, we consider the downlink transmission of the RIS-assisted MU-MISO system. The AP is equipped with $M$ antennas to simultaneously serve $K$ single-antenna users. To improve the WSR of the system, a RIS with $N$ reflective elements is deployed to adjust the communication environments. The system is equipped with a controller which communicates with the AP and RIS for exchanging information such as channel state information and the adopted beamforming vectors.

The channels from AP to RIS, from RIS to user $k$, and from AP to user $k$ are denoted as $\mathbf{G} \in \mathbb{C}^{N\times M}$, $\mathbf{h}_{\mathrm{r},k} \in \mathbb{C}^{1\times N}$, and $\mathbf{h}_{\mathrm{d},k} \in \mathbb{C}^{1 \times M}$, respectively. Assuming that all the channels are quasistatic. The phase shift caused by RIS is denoted as a diagonal matrix $\boldsymbol{\Theta} = \operatorname{diag}(\theta_1, \theta_2, \ldots, \theta_N) \in \mathbb{C}^{N\times N}$, where $\theta_n$ is the phase shift of the $n$-th reflective element of RIS. The amplitude reflection coefficients of the reflective elements are assume to be equal and are omitted, because tunning the amplitude of the incident signal leads to higher energy consumption and hardware cost. The value of $\theta_n$ is limited by the hardware and can be selected from a discrete set. Denote $B$ as the number of phase shifter bits and the number of possible phase shifts as $2^B$. Therefore, the discrete phase shift set can be denoted as $\mathcal{F} = \left\{\theta_n | \theta_n = e^{\jmath \varphi_n}, \varphi_n \in \left\{ 0, \triangle\theta, \ldots, ( 2^B-1)\triangle\theta\right\}\right\}$, where $\triangle\theta = 2\pi/2^B$. Ideally, when $B \rightarrow \infty$, the continuous phase shift is considered.

The signal destined for user $k$ is denoted as $s_k$ and is assumed to be independent, having zero mean and unit variance. The received signal at user $k$ can be expressed as
\begin{equation}
    y_{k} = (\mathbf{h}_{\mathrm{d}, k}+\mathbf{h}_{\mathrm{r} , k} \boldsymbol{\Theta} \mathbf{G}) \sum_{i=1}^{K} \mathbf{w}_{i} s_{i}+n_k,
\end{equation}
where $\mathbf{w}_k\in \mathbb{C}^{M\times 1}$ represents the active beamforming for user $k$, $n_k \sim \mathcal{CN}(0, \sigma_k^2)$ is the received signal noise of user $k$, which follows the circularly symmetric complex Gaussian distribution with mean zero and variance $\sigma_k^2$. By introducing the passive beamforming vector $\boldsymbol{\theta} = \left[\theta_1, \ldots, \theta_N \right]^H \in \mathbb{C}^{1\times N}$ and the cascaded channel $\mathbf{H}_{\mathrm{r} ,k} = \operatorname{diag}(\mathbf{h}_{\mathrm{r},k})\mathbf{G} \in \mathbb{C}^{N\times M}$, the user $k$'s received signal takes the form
\begin{equation}
  y_k = \left(\mathbf{h}_{\mathrm{d},k} + \boldsymbol{\theta}^H\mathbf{H}_{\mathrm{r},k}\right)\sum_{i = 1}^K\mathbf{w}_i s_i + n_k.
\end{equation}
In this expression, signals reflected by the RIS multiple times are disregarded due to extensive path loss, and other user signals are treated as interference to user $k$. The achievable rate of the AP-to-user $k$ channel is given by
\begin{equation}
  \mathcal{R}_{k} = \operatorname{log}\left(1+\frac{\left|\left(\mathbf{h}_{\mathrm{d}, k}+\boldsymbol{\theta}^H \mathbf{H}_{\mathrm{r}, k}\right) \mathbf{w}_{k}\right|^{2}}{\sum_{i=1, i \neq k}^{K}\left|\left(\mathbf{h}_{\mathrm{d}, k}+\boldsymbol{\theta}^H \mathbf{H}_{\mathrm{r}, k}\right) \mathbf{w}_{i}\right|^{2}+\sigma_{k}^{2}}\right).
  \label{equ_channel_capacity}
\end{equation}

\vspace{5pt}
\subsection{Problem Formulation}
\subsubsection{Perfect CSI}
We first address the joint beamforming problem in scenarios with perfect CSI,\footnote{Various conventional techniques for channel estimation can be explored in \cite{mishraChannelEstimationLowcomplexity2019a, jensenOptimalChannelEstimation2020, wangChannelEstimationIntelligent2020}. However, these methods fall outside the scope of this paper and are deferred for future investigation.} aiming to maximize the WSR of the system. The problem can be formulated as
\begin{equation} 
  \begin{aligned}
  ({\bf P1}): \quad \max _{\mathbf{W}, \boldsymbol{\theta}} \,\, & \sum_{k=1}^{K} \alpha_{k} \mathcal{R}_k \\
  \text {s.t.}\,\, & \theta_{n} \in \mathcal{F}, \quad \forall n=1, \ldots, N, \\
  &\sum_{k=1}^{K}\left\|\mathbf{w}_{k}\right\|^{2} \leq P_{\mathrm{T}}.
  \end{aligned}
\end{equation}
Here, $\mathbf{W} = \left[\mathbf{w}_1^T, \mathbf{w}_2^T, \ldots, \mathbf{w}_K^T\right]^T\in \mathbb{C}^{M\times K}$, and $\alpha_k$ represents the priority of user $k$. The elements of the passive beamforming need to be selected from set $\mathcal{F}$, constrained by a unit modulus condition. Active beamforming is bounded by the maximum transmit power constraint: $\sum_{k=1}^{K}\left\|\mathbf{w}_{k}\right\|^{2} \leq P_{\mathrm{T}}$. This optimization task is challenging due to the non-convex nature of the objective function and constraints, with the two optimization variables being deeply interconnected.

\subsubsection{Imperfect CSI}

To optimize the passive beamforming vector, accurate estimation of RIS-related channels $\{\mathbf{G},\mathbf{h}_{\mathrm{r},k}\}$ is essential. However, practically, the number of parameters requiring estimation is usually large due to the large number of reflective elements on the RIS. Additionally, RIS elements often lack the capability for channel estimation. Thus, channel estimation for RIS-related channels is complex and time-consuming, prompting the need for exploring joint beamforming in imperfect CSI scenarios.

\begin{figure}[tbp!]
  \centering
  \includegraphics[width=0.48\textwidth]{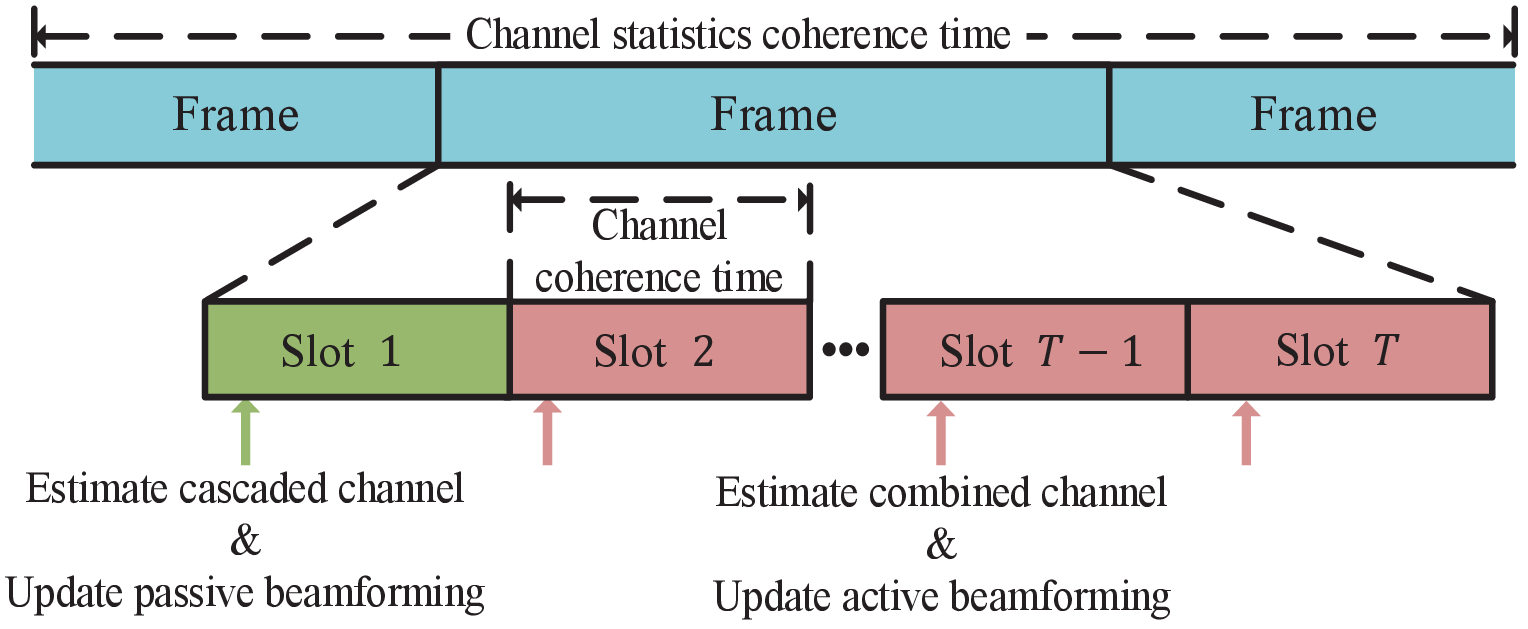}
  \caption{Frame structure of the proposed transmission protocol.}
  \label{fig_frame_structure}
\end{figure}

A frame structure, as depicted in Fig.~\ref{fig_frame_structure}, is considered. Channel statistics are assumed to be invariant within adjacent frame, and the channel is coherent throughout a slot. To reduce substantial channel estimation overhead, cascaded channels are only estimated once in the first slot of each frame. In the remaining slots, only the combined channel $\mathbf{h}_{k} = \mathbf{h}_{\mathrm{d},k} + \boldsymbol{\theta}^H\mathbf{H}_{\mathrm{r},k}$ is estimated. The combined channel is assumed to be estimated accurately. This assumption is reasonable due to the relatively small dimension of the combined channel, allowing conventional methods to be employed. After estimating the combined channel, active beamforming vectors can be updated without requiring cascaded channel information.

However, estimating cascaded channels perfectly with short pilots is difficult. Additionally, these channels tend to change in subsequent slots due to channel variability. Hence, when designing passive beamforming, our aim is to maximize the average WSR in the following frame transmissions. This results in a two-timescale optimization problem:
\begin{equation}
    \begin{aligned}
    ({\bf P2}):\quad \max _{\boldsymbol{\theta}} \,\, & \mathbb{E}_{\xi}{\left[\max _{\mathbf{W}^{[\xi]}} \sum_{k=1}^{K} \alpha_{k} \mathcal{R}_k^{[\xi]} \right]} \\
    \text { s.t.} \,\, & \left|\theta_n\right|\in \mathcal{F}, \quad \forall n=1, \ldots, N,  \\
    &\sum_{k=1}^K\left\|\mathbf{w}_k^{[\xi]}\right\|^2 \leq P_{\mathrm{T}}, \quad \forall \xi.
    \end{aligned}
\end{equation}
In this formulation, $\xi$ signifies the sample index, and $\mathcal{R}_k^{[\xi]}$ is determined by substituting $\{\mathbf{W}^{[\xi]},\mathbf{h}_{\mathrm{d},k}^{[\xi]}, \mathbf{H}_{\mathrm{r},k}^{[\xi]}\}$ into (\ref{equ_channel_capacity}) instead of $\{\mathbf{W}, \mathbf{h}_{\mathrm{d},k}, \mathbf{H}_{\mathrm{r},k}\}$. The terms $\{\mathbf{h}_{\mathrm{d},k}^{[\xi]}, \mathbf{H}_{\mathrm{r},k}^{[\xi]}\}$ represent the estimated channel parameters from previous frames, which are stored in memory and then randomly selected for the passive beamforming design. This problem involves optimizing both long-term variable $\boldsymbol{\theta}$ and short-term optimized variable $\mathbf{W}^{[\xi]}$. Additionally, the expectation operator adds to the problem's complexity.

\section{Proposed Joint Beamforming Algorithm}
\label{Sec_Proposed_Beamforming_Algorithm}

In this section, we initiate the transformation of the optimization problem ${\bf P1}$ into an equivalent WMMSE optimization. Subsequently, we disentangle the active and passive beamforming optimizations. The local optimal solutions for active and passive beamforming are then derived through the WMMSE optimization and the PI algorithm. Ultimately, we introduce the WMMSE-PI algorithm, which hinges on an alternating optimization strategy.

\subsection{Problem Transformation}
The optimization problem ${\bf P1}$ is challenging to be solved because both the objective function and the unit module constraint are non-convex. We first transform the optimization into a WMMSE optimization \cite{shiIterativelyWeightedMMSE2011}, which is written as
\begin{equation}
  \begin{aligned}
  ({\bf P3}): \quad \min _{\mathbf{W}, \boldsymbol{\theta}, \mathbf{u}, \boldsymbol{\lambda}}\,\, &\sum_{k=1}^{K} \alpha_{k} (\lambda_k e_k - \log \lambda_k) \\
  \text {s.t.}\,\, & \theta_{n} \in \mathcal{F}, \quad \forall n=1, \ldots, N.
  \end{aligned}
\end{equation}
where $\mathbf{u} = \{u_1, u_2, \ldots, u_K\}$ and $\boldsymbol{\lambda} = \{\lambda_1, \lambda_2, \ldots, \lambda_K\}$. $u_k$ is the an auxiliary variable, which can be regarded as the receive gain of user $k$. $\lambda_k>0$ is an auxiliary variable without practical meannings. $\alpha_k$ is the priority of user $k$. $e_k$ is the equivalent mean square error of the received signal $s_k$, which can be written as
\begin{equation}
  \begin{aligned}
    e_k = & 1 - u_k^*\mathbf{h}_k \mathbf{w}_k - u_k \mathbf{w}_k^H \mathbf{h}_k^H + u_k u_k^* \mathbf{h}_k \sum_{i = 1}^K\mathbf{w}_i \mathbf{w}_i^H \mathbf{h}_k^H\\
    & + \frac{\sigma_k^2}{P_\mathrm{T}} u_k u_k^* \sum_{i = 1}^K\mathbf{w}_i^H \mathbf{w}_i.
  \end{aligned}
  \label{equ_equivalent_MSE}
\end{equation}
An auxiliary term $ \sum_{i = 1}^K (\mathbf{w}_i^H \mathbf{w}_i)/P_\mathrm{T}$ is introduced to the last term of (\ref{equ_equivalent_MSE}) to remove the maximum transmit power constraint in ${\bf P1}$. The equivalence of ${\bf P1}$ and ${\bf P3}$ can be proved according to \cite{shiIterativelyWeightedMMSE2011,huIterativeAlgorithmInduced2021}. The objective function of ${\bf P3}$ is convex with respect to each variable, which simplifies the following derivation.

\subsection{Active Beamforming Design}

By iteratively optimizing one variable and fixing other variables, the closed-form solution of $u_k$, $\lambda_k$ and $\mathbf{w}_k$ can be easily derived as
\begin{align}
  u_k^{\rm opt} & = \left(\mathbf{h}_k\sum_{i = 1}^K\mathbf{w}_i \mathbf{w}_i^H \mathbf{h}_k^H + \frac{\sigma_k^2}{P_\mathrm{T}} \sum_{i = 1}^K\mathbf{w}_i^H \mathbf{w}_i \right)^{-1}\mathbf{h}_k \mathbf{w}_k, \label{equ_u_k}\\
  \lambda_k^{\rm opt} & = \left(1 - u_k^* \mathbf{h}_k \mathbf{w}_k\right)^{-1}, \label{equ_lambda_k}\\
  \mathbf{w}_k^{\rm opt} & = \alpha_k u_k \lambda_k \left(\sum_{i = 1}^K \alpha_i |u_i|^2 \lambda_i \left(\frac{\sigma_i^2}{P_\mathrm{T}} + \mathbf{h}_i^H\mathbf{h}_i\right)\right)^{-1}\mathbf{h}_k^H.\label{equ_w_k}
\end{align}
Thus we obtain the optimal active beamforming (\ref{equ_w_k}) when other optimization variables are fixed.
\subsection{Passive Beamforming Design}
\par
By fixing $\{u_k, \lambda_k, \mathbf{w}_k\}$, ${\bf P3}$ can be simplified into an optimization related to the passive beamforming vector as
\begin{equation}
  \begin{aligned}
  ({\bf P4}): \quad \min _{\boldsymbol{\theta}} \,\, &\boldsymbol{\theta}^H \mathbf{A} \boldsymbol{\theta} + \boldsymbol{\beta}^H \boldsymbol{\theta} + \boldsymbol{\theta}^H \boldsymbol{\beta} \\
  \text{s.t.}\,\, &\theta_{n} \in \mathcal{F}, \quad \forall n=1, \ldots, N,
  \end{aligned}
\end{equation}
where
\begin{align}
\mathbf{A} & = \sum_{k = 1}^{K} \alpha_k \lambda_k |u_k|^2 \mathbf{H}_{\mathrm{r}, k}\sum_{i=1}^K \mathbf{w}_i \mathbf{w}_i^H \mathbf{H}_{\mathrm{r}, k}^H,\\
\boldsymbol{\beta} & = \sum_{k = 1}^{K} - \alpha_k \lambda_k u_k^* \mathbf{H}_{\mathrm{r}, k} \mathbf{w}_k + \alpha_k \lambda_k |u_k|^2 \mathbf{H}_{\mathrm{r}, k} \sum_{i=1}^K \mathbf{w}_i \mathbf{w}_i^H \mathbf{h}_{\mathrm{d}, k}^H.
\end{align}
Although the objective function is convex to $\boldsymbol{\theta}$, the unit module constraints are non-convex. The proplem is still hard to be solved directly. Inspired by \cite{wuIntelligentReflectingSurface2019}, we introduce an auxiliary variables $z$, and denote $\boldsymbol{x} = [\boldsymbol{\theta}^T, z]^T \in \mathbb{C}^{(N+1)\times 1}$. Therefore, ${\bf P4}$ can be equivalently transformed into
\begin{equation}
  \begin{aligned}
  ({\bf P5}): \quad\max_{\boldsymbol{x}} \,\, &\boldsymbol{x}^H \mathbf{B} \boldsymbol{x}\\
  \text{s.t.}\,\, &\, x_{n} \in \mathcal{F}, \quad \forall n=1, \ldots, N+1,
  \end{aligned}
\end{equation}
where $x_n$ is the $n$-th element of $\boldsymbol{x}$ and $\mathbf{B}$ can be denoted as
\begin{equation}
  \mathbf{B} =\left[\begin{array}{cc}
  -\mathbf{A} & -\boldsymbol{\beta} \\
  -\boldsymbol{\beta}^H & 0
  \end{array}\right].
  \label{equ_B}
\end{equation}
The optimization problem ${\bf P5}$ is a unimodular quadratic program and can be solved by the PI algorithm \cite{soltanalianDesigningUnimodularCodes2014}, which uses the iterative solution as
\begin{equation}
  \boldsymbol{x}^{(p+1)} = e^{\jmath \operatorname{angle}(\mathbf{R}\boldsymbol{x}^{(p)})}.
  \label{equ_PI}
\end{equation}
Here, $\boldsymbol{x}^{(p)}$ is the value of $\boldsymbol{x}$ in the $p$-th iteration. And $\mathbf{R} = \mathbf{B} + \gamma \mathbf{I}$, where $\gamma\mathbf{I}$ is introduced to ensure that $\mathbf{R}$ is positive definite. This guarantees the convergence of the iterative algorithm, as shown in Lemma \ref{theo_1}.
\begin{lemma}
  The PI algorithm is guaranteed to converge to at least a local optimum of ${\bf P5}$ when $\mathbf{R}$ is positive-defined.
  \label{theo_1}
\end{lemma}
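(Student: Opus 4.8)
The plan is to recognize the PI recursion (\ref{equ_PI}) as a minorize--maximize (MM) scheme, prove that it drives the objective monotonically upward, and then combine compactness with the strict positive-definiteness of $\mathbf{R}$ to pin down the limiting behavior. As a preliminary normalization I would note that every feasible $\boldsymbol{x}$ satisfies $\|\boldsymbol{x}\|^2 = N+1$, so $\boldsymbol{x}^H\mathbf{R}\boldsymbol{x} = \boldsymbol{x}^H\mathbf{B}\boldsymbol{x} + \gamma(N+1)$; hence ${\bf P5}$ is equivalent to maximizing $f(\boldsymbol{x}) \triangleq \boldsymbol{x}^H\mathbf{R}\boldsymbol{x}$, and any $\gamma > -\lambda_{\min}(\mathbf{B})$ renders $\mathbf{R}\succ\mathbf{0}$ without altering the set of maximizers.

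The core step is a quadratic lower bound. Since $\mathbf{R}\succeq\mathbf{0}$, expanding $(\boldsymbol{x}-\boldsymbol{x}^{(p)})^H\mathbf{R}(\boldsymbol{x}-\boldsymbol{x}^{(p)})\geq 0$ yields
\begin{equation}
  f(\boldsymbol{x}) \;\geq\; g(\boldsymbol{x};\boldsymbol{x}^{(p)}) \;\triangleq\; 2\Re\{\boldsymbol{x}^H\mathbf{R}\boldsymbol{x}^{(p)}\} - f(\boldsymbol{x}^{(p)}),
  \label{equ_minorizer}
\end{equation}
with equality at $\boldsymbol{x}=\boldsymbol{x}^{(p)}$, so $g(\cdot\,;\boldsymbol{x}^{(p)})$ is a surrogate that touches $f$ at the current iterate. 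Maximizing $g$ over the feasible set amounts to maximizing $\sum_n \Re\{x_n^*(\mathbf{R}\boldsymbol{x}^{(p)})_n\}$, which separates across entries and, under the unit-modulus constraint, is solved exactly by the phase alignment $x_n = e^{\jmath\operatorname{angle}((\mathbf{R}\boldsymbol{x}^{(p)})_n)}$ --- that is, by (\ref{equ_PI}). Hence $g(\boldsymbol{x}^{(p+1)};\boldsymbol{x}^{(p)}) \geq g(\boldsymbol{x}^{(p)};\boldsymbol{x}^{(p)}) = f(\boldsymbol{x}^{(p)})$, and combining with (\ref{equ_minorizer}) gives $f(\boldsymbol{x}^{(p+1)}) \geq f(\boldsymbol{x}^{(p)})$; subtracting the surrogate identity sharpens this to $f(\boldsymbol{x}^{(p+1)}) - f(\boldsymbol{x}^{(p)}) \geq (\boldsymbol{x}^{(p+1)}-\boldsymbol{x}^{(p)})^H\mathbf{R}(\boldsymbol{x}^{(p+1)}-\boldsymbol{x}^{(p)}) \geq \lambda_{\min}(\mathbf{R})\,\|\boldsymbol{x}^{(p+1)}-\boldsymbol{x}^{(p)}\|^2$.

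From here I would close the argument as follows. The feasible set is compact and $f$ is continuous, so $\{f(\boldsymbol{x}^{(p)})\}$ is bounded above; being non-decreasing it converges, whence the telescoped increments are summable and, using $\lambda_{\min}(\mathbf{R})>0$, $\|\boldsymbol{x}^{(p+1)}-\boldsymbol{x}^{(p)}\|\to 0$. Any accumulation point $\boldsymbol{x}^\star$ (one exists by compactness) is then a fixed point of the PI map; since $g(\cdot\,;\boldsymbol{x}^\star)$ minorizes $f$ and agrees with it at $\boldsymbol{x}^\star$, a feasible ascent direction of $f$ at $\boldsymbol{x}^\star$ would also be an ascent direction of the coordinate-separable, already-maximized surrogate, which is impossible; hence $\boldsymbol{x}^\star$ meets the first-order optimality conditions of ${\bf P5}$ and is a local maximizer.

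The step I expect to be the main obstacle is precisely this last one: monotone ascent and convergence of the objective values are routine, but upgrading ``the objective sequence converges'' to ``the iterates approach a genuine local optimum'' relies on the asymptotic-regularity estimate $\|\boldsymbol{x}^{(p+1)}-\boldsymbol{x}^{(p)}\|\to 0$ --- which is exactly why $\mathbf{R}\succ\mathbf{0}$ rather than merely $\mathbf{R}\succeq\mathbf{0}$ is imposed --- together with a Zangwill-type argument that fixed points of (\ref{equ_PI}) are local optima and not merely stationary points. Some extra care is also needed when $\mathcal{F}$ is a finite set, in which case the elementwise maximization of $g$ returns the nearest admissible phase and the same monotonicity chain continues to hold.
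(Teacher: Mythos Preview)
Your proposal is correct and follows essentially the same minorize--maximize route as the paper: both expand $(\boldsymbol{x}-\boldsymbol{x}^{(p)})^H\mathbf{R}(\boldsymbol{x}-\boldsymbol{x}^{(p)})\geq 0$ to get the linear surrogate, observe that (\ref{equ_PI}) is its exact maximizer, and conclude strict monotone ascent of a bounded objective. The only notable divergence is in certifying local optimality at the limit: the paper characterizes the fixed point via $\mathbf{R}\bar{\boldsymbol{x}} = \operatorname{abs}(\mathbf{R}\bar{\boldsymbol{x}})\circ\bar{\boldsymbol{x}}$, then verifies the second-order sufficient condition $\operatorname{diag}(\operatorname{abs}(\mathbf{R}\bar{\boldsymbol{x}}))\succeq\mathbf{R}$ and defers to \cite{soltanalianDesigningUnimodularCodes2014}, whereas you extract the quantitative gap $f(\boldsymbol{x}^{(p+1)})-f(\boldsymbol{x}^{(p)})\geq\lambda_{\min}(\mathbf{R})\|\boldsymbol{x}^{(p+1)}-\boldsymbol{x}^{(p)}\|^2$ to force asymptotic regularity and then argue that a fixed point of the surrogate maximization admits no feasible ascent direction. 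Your route is more self-contained and makes the role of strict positive definiteness more transparent; the paper's route gives a cleaner second-order certificate but leans on an external reference.
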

\begin{proof}
  \par
  The proof is provided in Appendix \ref{Appendix_A}.
\end{proof}
We set $\gamma = \|\mathbf{B}\|$ in the following sections\footnote{Setting the value of $\gamma$ slightly larger than the minimum eigenvalue of $\mathbf{B}$ is sufficient to ensure the positive definiteness of $\mathbf{R}$. However, calculating eigenvalues introduces higher computational complexity compared to computing Frobenius norms. As a result, we opt to set $\gamma = \|\mathbf{B}\|$ in order to mitigate the computational burden.}. After the iteration converges, the passive beamforming vector can be derived as $\boldsymbol{\theta} = x_{N+1}^* \times \boldsymbol{x}[1:N]^T$, which is the multiple of conjugate of the last term of $\boldsymbol{x}$ and the first $N$ terms of $\boldsymbol{x}$.

\subsection{Proposed Joint Beamforming Algorithm}
\begin{algorithm}[tb]
  \SetKwInOut{Input}{input}\SetKwInOut{Output}{output}
  \Input{$\mathbf{G}$, $\mathbf{h}_{\mathrm{d}, k}$, $\mathbf{h}_{\mathrm{r}, k}$, $\alpha_k$, for $k = 1,2,\ldots, K$}
  \Output{$\boldsymbol{\theta}$, $\mathbf{W}$}

  randomly initialize $\boldsymbol{\theta}$ \label{alg1_init1}\;
  initialize $\mathbf{W}$ with ZF beamforming \label{alg1_init2}\;
  calculate $u_k$ and $\lambda_k$ according to (\ref{equ_u_k}) and (\ref{equ_lambda_k}) \label{alg1_init3}\;
  update $\mathbf{W}$ according to (\ref{equ_w_k}) \label{alg1_init4}\;
  set $ i = 1$\;
  \While{\textbf{$i \leq I_O$} \label{alg1_iter0}}{
    calculate $u_k$ and $\lambda_k$ according to (\ref{equ_u_k}) and (\ref{equ_lambda_k}) \label{alg1_iter1}\;
    update $\boldsymbol{\theta}$ according to (\ref{equ_PI}) until the objective function of ${\bf P5}$ converges \label{alg1_iter2}\;
    update $\mathbf{W}$ according to (\ref{equ_w_k}) and scale $\mathbf{W}$ according to the maximum transmit power constraint \label{alg1_iter3}\;
    $i = i+1$\;
  }\label{alg1_iter4}
  map $\theta_n$ to the nearest element in $\mathcal{F}$ \label{alg1_quantization}\;
  calculate $u_k$ and $\lambda_k$ according to (\ref{equ_u_k}) and (\ref{equ_lambda_k}) \label{alg1_updateW_1}\;
  update $\mathbf{W}$ according to (\ref{equ_w_k}) and scale $\mathbf{W}$ according to the maximum transmit power constraint \label{alg1_updateW_2}\;
  \caption{WMMSE-PI Algorithm for Joint Beamforming with Perfect CSI}
  \label{alg_WMMSE_PI_Perfect}
\end{algorithm}

The WMMSE-PI algorithm is formulated based on the aforementioned derivations and is outlined in Algorithm \ref{alg_WMMSE_PI_Perfect}. Initialization occurs in Steps \ref{alg1_init1} to \ref{alg1_init4}, with active beamforming vectors initialized using the WMMSE algorithm (Steps \ref{alg1_init3} and \ref{alg1_init4}), and the passive beamforming vector initialized randomly. The primary section of the algorithm comprises two levels of iteration. The outer $I_O$ iteration updates the active and passive beamforming vectors sequentially, covering Steps \ref{alg1_iter0} to \ref{alg1_iter4} of Algorithm \ref{alg_WMMSE_PI_Perfect}. Within Step \ref{alg1_iter2}, the PI algorithm is employed to iteratively refine the passive beamforming vector---an operation that constitutes the inner iteration. To adhere to the discrete phase shift constraint, the output passive beamforming vector is quantized as outlined in Step \ref{alg1_quantization}. Steps \ref{alg1_updateW_1} and \ref{alg1_updateW_2} facilitate a single update of the active beamforming vector through the WMMSE algorithm to alleviate potential degradation attributed to quantization. The quantization process is performed solely once after the last iteration, preserving the algorithm's convergence. To prevent confusion, we employ superscript ``$p$'' to denote the value of the $p$-th inner iteration variable, and superscript ``$i$'' is used to represent the value of the $i$-th outer iteration variable.

Notably, Step \ref{alg1_iter2} is derived from maximizing the objective function of ${\bf P5}$, which is equivalent to minimizing the objective function of ${\bf P3}$. Considering that Steps \ref{alg1_iter1} and \ref{alg1_iter3} are also designed to minimize the objective function of ${\bf P3}$, the objective function of ${\bf P3}$ continually decreases throughout the two-layer iterations. As a result, Algorithm \ref{alg_WMMSE_PI_Perfect} ensures the convergence of the objective function of ${\bf P3}$, thus confirming the convergence of the equivalent ${\bf P1}$.

\section{Model-Driven DL for Joint Beamforming}
\label{Sec_Model_Driven}
\par
This section introduces a model-driven DL approach designed to mitigate the number of iterations required by the WMMSE-PI algorithm, thereby reducing computational complexity. This approach leverages DL to learn crucial parameters associated with convergence speed, and incorporates a GNN to enhance the initialization of the active beamforming vector beyond what the WMMSE algorithm computes. To address the constraint of discrete phase shifts in passive beamforming, an approximate quantization function is integrated to enable backpropagation during the DL training process. After an in-depth presentation of the model-driven DL, a comprehensive comparison and analysis of the complexity among the proposed techniques and other existing algorithms is provided.
\subsection{Convergence Acceleration}
\par
The convergence of the WMMSE-PI algorithm is guaranteed, but the convergence speed is dependent on the value of $\gamma$. For example, when $\gamma \rightarrow +\infty $, $\boldsymbol{x}^{(p+1)} = e^{\jmath \operatorname{angle}\left(\left(\mathbf{B}+\gamma \mathbf{I}\right)\boldsymbol{x}^{(p)}\right)} \rightarrow e^{\jmath \operatorname{angle}\left(\gamma \boldsymbol{x}^{(p)}\right)} = \boldsymbol{x}^{(p)}$, causing the algorithm to slow down. However, if $\gamma$ is too small, the positive definiteness of $\mathbf{R}$ is not guaranteed, and the algorithm's convergence may be compromised, resulting in degraded performance. To expedite the convergence, we design a model-driven DL by selecting $\gamma$ as a trainable variable.
\par
To update the passive beamforming, we iterate with (\ref{equ_PI}) until the objective function of ${\bf P5}$ converges in Algorithm \ref{alg_WMMSE_PI_Perfect}. The complexity is linear with the number of inner iterations. Hence, we set the number of inner iterations as 1 after the introduction of trainable variables. This modification also reduces the complexity of one outer iteration. The WMMSE-PI algorithm with $\gamma$ as trainable variables is termed as ``WMMSE-PINet''. The $\gamma$ is set as an  independent training variable in every outer iteration, and $\gamma^{(i)}$ stands for the value in the $i$-th outer iteration.

\subsection{Initialization with GNN}
\par
The required number of iterations not only depend on the convergence speed but also the initialization. In the proposed WMMSE-PI algorithm, the active beamforming vector is initialized by the ZF precoding and then updated with the WMMSE algorithm once, which may be far away from the global optimum and result in more iterations. Since the update of passive beamforming also depends on active beamforming, a better initialization of active beamforming will also lead to faster convergence of the algorithm. Therefore, we enhance the initialization of active beamforming by introducing the DL and considering the structure of optimal active beamforming vectors.
\par
When the variance of the received signal noise is equal, that is, $\sigma_1^2 = \sigma_2^2 = \cdots = \sigma_K^2 = \sigma^2$ \footnote{The assumption is reasonable when the users are in a similar environment and the user device type are the same. Because the variance of the received signal noise is mainly determined by natural noise and artificial noise.}, optimal active beamforming vector in (\ref{equ_w_k}) can be written as \cite{xiaDeepLearningFramework2020}
\begin{equation}
  \mathbf{w}_k = \sqrt{p_k} \, \frac{\left(\sigma^2\mathbf{I} + \sum_{i = 1}^K \zeta_k \mathbf{h}_i^H\mathbf{h}_i \right)^{-1}\mathbf{h}_k^H}{\Big\|\left(\sigma^2\mathbf{I} + \sum_{i = 1}^K \zeta_k \mathbf{h}_i^H\mathbf{h}_i \right)^{-1}\mathbf{h}_k^H \Big\|},
  \label{equ_w_structure}
\end{equation}
where $p_k$ and $\zeta_k$ are positive parameters and satisfy $\sum_{k = 1}^K p_k = \sum_{k = 1}^K \zeta_k = P_T$. If $p_k$ and $\zeta_k$ are obtained, the optimal active beamforming vectors can be easily calculated without iterations. We utilize DL to learn the values of $p_k$ and $\zeta_k$, which provides better interpretability and performance compared to directly learning the active beamforming vectors.
\par
The interactions in wireless communication networks can be well represented by the GNN \cite{shenGraphNeuralNetworks2021, jiangLearningReflectBeamform2021}. Specifically, the transmission signal and interference between users can be well represented by the combination and aggregation function in GNN. Additionally, the objective function of ${\bf P1}$ exhibits the properties of permutation invariance which implies that the order of the user index does not affect the WSR of the system. However, the neural network should be permutation equivariant, meaning that if the input vectors to the neural network are permuted, the output vectors should be permuted in the same way. Compared with the traditional fully connected neural network, GNN better embeds the properties of permutation invariance and permutation equivariance \cite{shenGraphNeuralNetworks2021, shenGraphNeuralNetworks2022}. Therefore, we adopt GNN as the network structure and design the input and output of GNN. GNN consists of the input layer, update layer, and output layer, which we describe in detail later. To avoid any confusion, we employ the superscript ``$g$'' to denote the value associated with the $g$-th layer variables.

\begin{figure*}[htbp!]
  \centering
  \includegraphics[width=0.9\textwidth]{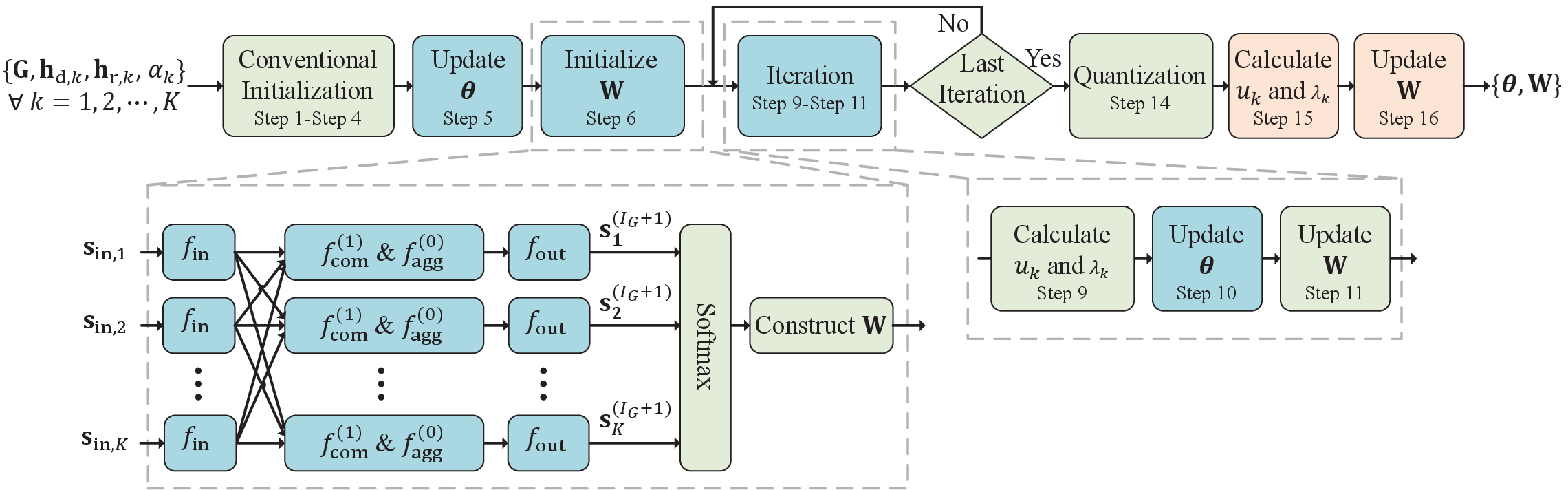}
  \caption{Network structure of WMMSE-PINet$^+$.}
  \label{fig_network_structure}
\end{figure*}

\begin{itemize}
  \item \textbf{Input layer:} The input layer converts input vectors into information flowing inside the network, which is denoted as
  \begin{equation}
    \mathbf{s}_k^{(0)} = f_{\mathrm{in}}(\mathbf{s}_{\mathrm{in}, k}),
  \end{equation}
  where $f_{\mathrm{in}}: \mathbb{R}^{4M+6} \rightarrow \mathbb{R}^{6M}$ is one fully-connected layer with ReLU as the activation function. The variables, $\tilde{\mathbf{s}}_{\mathrm{in}, k} = \{\mathbf{h}_{\mathrm{d}, k}^T, (\boldsymbol{\theta}^H\mathbf{H}_{\mathrm{r}, k})^T, u_k, \lambda_k, \alpha_k\}$, are contained in (\ref{equ_w_k}), which determins the optimal active beamforming vectors and thus the values of $p_k$ and $\zeta_k$. We stack the real and imaginary parts of $\tilde{\mathbf{s}}_{\mathrm{in}, k}$ as the input of neural network $\mathbf{s}_{\mathrm{in}, k} = \{\Re{(\tilde{\mathbf{s}}_{\mathrm{in}, k})}, \Im{(\tilde{\mathbf{s}}_{\mathrm{in}, k})} \} \in \mathbb{R}^{4M+6}$, because the complex values are not well-supported in common DL framework.
  \item \textbf{Update layer:} We denote $\mathbf{s}_k^{(g)}$ as the information of user $k$ in the $g$-th layer. The update of $\mathbf{s}_k^{(g)}$ should take $\mathbf{s}_k^{(g-1)}$ and the aggregated information of its neighboring nodes into consideration, $\{\mathbf{s}_j^{(g-1)}\}_{j\in \mathcal{N}_k}$, where $\mathcal{N}_k$ are the users without user $k$. The update layer is denoted as
  \begin{equation}
    \mathbf{s}_k^{(g)} = f_{\mathrm{com}}^{(g)}\left(\mathbf{s}_k^{(g-1)}, f_{\mathrm{agg}}^{(g-1)}\left(\psi\left(\{\mathbf{s}_j^{(g-1)}\}_{j\in \mathcal{N}_k}\right)\right)\right),
  \end{equation}
  where $f_{\mathrm{agg}}^{(g-1)}:\mathbb{R}^{6M}\rightarrow\mathbb{R}^{3M}$ represents the aggregation function, which is one fully-connected layer with ReLU as the activation function. To achieve the permutation invariant property, $\psi(\cdot)$ returns the result of element-wise max-pooling. The combination function is denoted as $f_{\mathrm{com}}^{(g-1)}:\mathbb{R}^{9M}\rightarrow\mathbb{R}^{3M}$, which is one fully-connected layer with ReLU as the activation function.
  \item \textbf{Output layer:} After the $I_G$ update layers, the information flowing inside the network should be converted to the output vectors used for constructing the active beamforming vectors, that is, $\mathbf{s}_{\mathrm{out},k} = \{p_k, \zeta_k\} \in \mathbb{R}^{2}$. The output layer is denoted as
  \begin{equation}
    \mathbf{s}_k^{(I_G+1)} = f_{\mathrm{out}}(\mathbf{s}_{k}^{(I_G)}),
  \end{equation}
  where $f_{\mathrm{out}}: \mathbb{R}^{3M} \rightarrow \mathbb{R}^{2}$ is one fully-connected layer without activation function. To satisfy the constraint, $\sum_{k = 1}^K p_k = \sum_{k = 1}^K \zeta_k = P_T$, the output vectors pass through a softmax layer, denoted as
  \begin{equation}
    \mathbf{s}_{\mathrm{out},k} = \frac{e^{\mathbf{s}_k^{(I_G+1)}}}{\sum_{k=1}^K e^{\mathbf{s}_k^{(I_G+1)}}} P_T,
  \end{equation}
  where the element-wise division is performed. Therefore, the active beamforming vectors can be constructed based on the output of GNN.
\end{itemize}

After introducing the details of GNN, we set the number of update layers as $I_G=1$, which achieve a better trade-off between performance and initialization complexity. WMMSE-PINet combined with GNN initialization is denoted as ``WMMSE-PINet$^+$''. Although GNN can be considered as a data-driven DL method, WMMSE-PINet$^+$ exhibits strong robustness due to the integrated expert knowledge, which will be evaluated in Section. \ref{sec_robustness}.

\subsection{Approximate Quantization Function}
\par
When considering the discrete phase shift of RIS, it is necessary to quantize $\boldsymbol{\theta}$ after the iteration of WMMSE-PINet$^+$. However, using traditional quantization functions may result in gradients of 0 or infinity during backpropagation, which halts the neural network training. A feasible solution is to use an approximate quantization function, as proposed in \cite{xuRobustDeepLearningBased2022}, which can be written as
\begin{equation} 
  Q(\boldsymbol{\theta}) = \frac{\pi}{2^B} \sum_{l=1}^{2^B}\operatorname{tanh}\left(\eta\left(\boldsymbol{\theta} - \frac{2\pi l}{2^B} \right)\right) + 1,
  \label{equ_quantization}
\end{equation}
where $B$ is the number of phase shifter bits, $\eta$ represents the approximate level. Higher values of $\eta$ lead to better approximation. The function $Q(\boldsymbol{\theta})$ ensures gradients are well-defined at all points. During training, the traditional quantization in Step \ref{alg1_quantization} of the Algorithm \ref{alg_WMMSE_PI_Perfect} is replaced by the approximate quantization function. The traditional quantization function is still used during testing and deploying.

\subsection{Model-driven DL WMMSE-PINet$^+$}
\begin{algorithm}[tb]
  \SetKwInOut{Input}{input}\SetKwInOut{Output}{output}
  \Input{$\mathbf{G}$, $\mathbf{h}_{\mathrm{d}, k}$, $\mathbf{h}_{\mathrm{r}, k}$, $\alpha_k$, for $k = 1,2,\ldots, K$}
  \Output{$\boldsymbol{\theta}$, $\mathbf{w}_{k}$, for $k = 1,2,\ldots, K$}

  randomly initialize $\boldsymbol{\theta}$ \label{alg2_init1}\;
  initialize $\mathbf{W}$ with ZF beamforming\label{alg2_init2}\;
  calculate $u_k$ and $\lambda_k$ according to (\ref{equ_u_k}) and (\ref{equ_lambda_k})\label{alg2_init3}\;
  update $\mathbf{W}$ according to (\ref{equ_w_k}) \label{alg2_init4}\;
  update $\boldsymbol{\theta}$ according to (\ref{equ_PI}) and trainable variable $\gamma^{(0)}$\label{alg2_init5}\;
  initialize $\mathbf{W}$ with GNN \label{alg2_init6}\;
  set $ i = 1$\;
  \While{\textbf{$i \leq I_O$} \label{alg2_iter0}}{
    calculate $u_k$ and $\lambda_k$ according to (\ref{equ_u_k}) and (\ref{equ_lambda_k})\;
    update $\boldsymbol{\theta}$ according to (\ref{equ_PI}) and trainable variable $\gamma^{(i)}$\;
    update $\mathbf{W}$ according to (\ref{equ_w_k}) and scale $\mathbf{W}$ according to the maximum transmit power constraint\;
    $i = i+1$\;
  }\label{alg2_iter4}
  map $\theta_n$ to the nearest element in $\mathcal{F}$ (using an approximate quantization function, $Q(\boldsymbol{\theta})$, during training, and use a traditional quantization function during testing)\;
  calculate $u_k$ and $\lambda_k$ according to (\ref{equ_u_k}) and (\ref{equ_lambda_k})\;
  update $\mathbf{W}$ according to (\ref{equ_w_k}) and scale $\mathbf{W}$ according to the maximum transmit power constraint\;
  \caption{WMMSE-PINet$^+$ for Joint Beamforming with Perfect CSI}
  \label{alg_WMMSE_PINet_Perfect}
\end{algorithm}

\begin{table*}[tbp]
  \caption{Complexity Analysis}
  \label{tab_complexity}
    \centering
    \begin{tabular}{cccc}
      \toprule
      & Computational Complexity & Number of Iterations Required & Runtime (ms)  \\
      \midrule
      FP-ADMM \cite{guoWeightedSumRateMaximization2019} & $\mathcal{O}\left(I_O\left(I_A(KNM+KM^2)+N^3\right)\right)$ & 28 & 216.4485      \\
      WMMSE-MO \cite{guoWeightedSumRateMaximization2020}& $\mathcal{O}\left(I_O \left(I_{\lambda}I_{W}KM^3 + I_R K^2N^2\right)\right)$ & 22 & 260.788 \\
      BCD \cite{guoWeightedSumRateMaximization2020}& $\mathcal{O}\left(I_O\left(I_A(KNM+KM^2)+K^2N^2\right)\right)$ & 50 & 105.8293
      \\
      WMMSE-PI & $\mathcal{O}\left(I_O\left(KM^3+K^2N^2 + I_P N^2\right)\right)$ & 38
      & 87.5765      \\
      WMMSE-PINet & \textbf{$\mathcal{O}\left(I_O\left(K M^3+ K^2N^2\right)\right)$} & 10 & 6.445 \\
      WMMSE-PINet$^+$ & \textbf{$\mathcal{O}\left(KM^2 + I_O\left(K M^3+ K^2N^2\right)\right)$} & 3 & \textbf{2.564} \\
      \bottomrule
    \end{tabular}
\end{table*}

WMMSE-PINet$^+$ is presented as Algorithm \ref{alg_WMMSE_PINet_Perfect} and the structure is depicted in Fig.~\ref{fig_network_structure}. Steps \ref{alg2_init1} to \ref{alg2_init4} are called as ``Conventional Initialization''. Compared to WMMSE-PINet, Steps \ref{alg2_init5} and \ref{alg2_init6} are introduced to improve the initialization of passive and active beamforming, respectively. The passive beamforming is initialized randomly in Step \ref{alg2_init1}, and the elements of $\boldsymbol{\theta}^H\mathbf{H}_{\mathrm{r}, k}$ would also be random if the Step \ref{alg2_init5} is not added. $\gamma^{(0)}$ represents the value of $\gamma$ used in Step \ref{alg2_init5}.

The block containing trainable variables in Fig.~\ref{fig_network_structure} is colored blue. Because the values of $\gamma^{(i)}$ in every outer iteration are set as unshared, the total trainable variables are $\boldsymbol{\Gamma} = \{\gamma^{(0)}, \gamma^{(1)}, \ldots, \gamma^{(I_O)}, f_{\mathrm{in}}, f_{\mathrm{agg}}^{(0)}, f_{\mathrm{com}}^{(1)}, f_{\mathrm{out}}\}$. Each of the ${\{f_{\mathrm{in}}, f_{\mathrm{agg}}^{(0)}, f_{\mathrm{com}}^{(1)}, f_{\mathrm{out}}\}}$ contains a weighted matrix and bias. Therefore, the total number of trainable variables is $I_O + 69M^2 + 54M + 3$, which is typically smaller than that of traditional data-driven neural networks. WMMSE-PINet and WMMSE-PINet$^+$ are trained using the WSR as the loss function. During the training process, the approximate quantization function is utilized instead of the traditional one to avoid the gradient becoming 0 or infinity.

\subsection{Complexity Analysis}
\par
We compare the complexity of the proposed algorithms with other joint beamforming algorithms, including the fractional programing-alternating direction method of multipliers (FP-ADMM) algorithm \cite{guoWeightedSumRateMaximization2019}, the WMMSE-manifold optimization (WMMSE-MO) algorithm \cite{guoWeightedSumRateMaximization2020}, and the BCD algorithm \cite{guoWeightedSumRateMaximization2020}.
The complexities and runtime of the algorithms are listed in Table \ref{tab_complexity}, where $\{I_A, I_{\lambda}, I_{W}, I_R\}$ represent the number of inner iterations in the algorithms as noted in their respective papers. $I_P$ is the number of inner iterations of the PI algorithm, and $I_O$ is the number of outer iterations. In RIS-assisted systems, the number of reflective elements is usually large, so the term that mainly contributes to the computational complexity is the one which contains $N$. From Table \ref{tab_complexity}, it can be observed that except for the FP-ADMM algorithm, other algorithms only include the square term of $N$. In our proposed three algorithms, the computational complexity mainly arises from the calculation of $\mathbf{A}$ when updating the passive beamforming vector, which is $\mathcal{O}\left(K^2 N^2\right)$. In WMMSE-PINet$^+$, the complexity attributed to the initialization is $\mathcal{O}\left(K M^2\right)$, which can be ignored compared to the complexity of subsequent iterations.
\par
To compare the computational complexity of different algorithms more intuitively, we choose the performance of WMMSE-PINet$^+$ with $I_O = 3$ iterations as a benchmark and compare the number of iterations and runtime of other algorithms required to achieve the same performance, assuming $M = 8$, $K = 4$, $N = 100$, and $P_\mathrm{T} = 10$ dBm. All algorithms are implemented on a PC with Windows 10 and Intel Core i7-10700 CPU @ 2.90GHz. The BCD algorithm have the lowest runtime among the state-of-the-art algorithms. Although the complexity of the WMMSE-MO algorithm does not contain the $N^3$ term, its runtime is longer than the FP-ADMM algorithm due to large coefficient of the $KM^3$ term. Compared with the BCD algorithm, the proposed WMMSE-PI algorithm reduces the runtime to achieve the same performance to about 90\%. Because the numbers of outer and inner iteration are both reduced by WMMSE-PINet, the runtime to achieve the same performance is about 6\% of the BCD algorithm, which is significantly reduced. As for WMMSE-PINet$^+$, although the GNN is introduced as the initialization, the runtime is less than 3\% of the BCD algorithm due to the fewer required number of iterations.

\section{Extended WMMSE-PINet-ImCSI for Imperfect CSI Scenarios}
\label{Sec_ImCSI}

This section introduces an extension of the proposed WMMSE-PINet$^+$ algorithm, designed to address the two-timescale optimization problem, denoted as ${\bf P2}$. Motivated by the online stochastic optimization framework outlined in \cite{liuOnlineSuccessiveConvex2018}, we resolve the short-term subproblem by employing the current combined channel. Meanwhile, the long-term master problem is tackled by employing sampled cascaded channels.

\subsection{Solve the Short-term Subproblem}
\par
The subproblem for optimizing the active beamforming vectors can be formulated as follows:
\begin{equation}
  \begin{aligned}
  ({\bf P6}):\quad \max _{\mathbf{W}^{[\xi]}} \,\, &\sum_{k=1}^{K} \alpha_{k} \mathcal{R}_k^{[\xi]} \\
  \text{s.t.}\,\, &\sum_{k=1}^K\left\|\mathbf{w}_k^{[\xi]}\right\|^2 \leq P_{\mathrm{T}}.
  \end{aligned}
\end{equation}
This subproblem can be transformed into the form of ${\bf P3}$ and solved using the WMMSE algorithm, providing a local optimum for the inner short-term optimization. After applying the WMMSE algorithm, the outer long-term optimization can be simplified as:
\begin{equation}
  \begin{aligned}
    ({\bf P7}): \quad \max_{\boldsymbol{x}} \,\, &\mathbb{E}_\xi [\boldsymbol{x}^H \mathbf{B}^{[\xi]} \boldsymbol{x}]\\
  \text{s.t.}\,\, &\, x_{n} \in \mathcal{F}, \quad \forall n=1, \ldots, N+1,
  \end{aligned}
\end{equation}
Here, $\mathbf{B}^{[\xi]}$ is calculated by substituting channel samples $\{\mathbf{h}_{\mathrm{d},k}^{[\xi]}, \mathbf{H}_{\mathrm{r},k}^{[\xi]}\}$ into (\ref{equ_u_k}), (\ref{equ_lambda_k}), (\ref{equ_w_k}), and (\ref{equ_B}). This calculation occurs only once and is used to obtain the stationary solution for both the short-term subproblem and the long-term master problem.

\subsection{Solve the Long-term Matser Problem} 

The expectation operator in ${\bf P6}$ poses a challenge to solve the problem. One classical approach is to use the sample average approximation method, which samples a new group of channels in each outer iteration and then uses the average of the samples to approximate the expectation. However, it requires significant memory overhead to store $\mathbf{B}^{[\xi]}, \, \xi = 1, \ldots, i$, in the $i$-th outer iteration. Inspired by the SSCA algorithm, a recursive approximation can be applied as:
\begin{equation}
  \begin{aligned}
  ({\bf P8}): \quad \max_{\boldsymbol{x}} \,\, &\boldsymbol{x}^H \tilde{\mathbf{B}}^{(i)} \boldsymbol{x}\\
  \text { s.t.}\,\, & \left|x_{n}\right|=1, \quad \forall n=1, \ldots, N+1.
  \end{aligned}
\end{equation}
In this formulation, $\tilde{\mathbf{B}}^{(i)}$ is calculated using a recursive relation:
\begin{equation}
  \tilde{\mathbf{B}}^{(i)} = (1 - \rho^{(i)})\tilde{\mathbf{B}}^{(i-1)} + \rho^{(i)} \mathbf{B}^{(i)},
  \label{equ_recursive_B}
\end{equation}
where $\tilde{\mathbf{B}}^{(0)}$ is initialized to $ \mathbf{0}$. In each iteration, a new group of channels is sampled and indexed as $i$, which is then used to calculate the corresponding $\mathbf{B}^{(i)}$ according to (\ref{equ_B}). Both optimization problems ${\bf P8}$ and ${\bf P5}$ share the same form, allowing the PI algorithm to be applied for solving ${\bf P8}$. The PI algorithm with trainable variable $\gamma^{(i)}$ is executed, resulting in a solution denoted as $\bar{\boldsymbol{\theta}}^{(i)}$. Finally, a recursive update is performed for the passive beamforming vector:
\begin{equation}
  \boldsymbol{\theta}^{(i)} = (1-\delta^{(i)})\boldsymbol{\theta}^{(i-1)} + \delta^{(i)} \bar{\boldsymbol{\theta}}^{(i)}.
  \label{equ_recursive_theta}
\end{equation}
Here, $\boldsymbol{\theta}^{(0)}$ is obtained during the initialization process. The values of $\{\rho^{(i)}, \delta^{(i)}\}$  determine the past samples on the current iteration, with smaller $\{\rho^{(i)}, \delta^{(i)}\}$ meaning the past samples have greater impact in the $i$-th iteration. Suitable values of $\{\rho^{(i)}, \delta^{(i)}\}$ can lead to improved algorithm performance. Consequently, we consider $\{\rho^{(i)}, \delta^{(i)}\}$ as trainable variables, which are optimized along with the other parameters of WMMSE-PINet$^+$ to achieve better performance.

\begin{algorithm}[tb]
  \SetKwInOut{Input}{input}\SetKwInOut{Output}{output}
  \Input{$\mathbf{h}_{\mathrm{d}, k}^{[i]}$, $\mathbf{G}^{[i]}$, $\mathbf{h}_{\mathrm{r}, k}^{[i]}$, $\alpha_k$, for $k = 1,2,\ldots, K \text{ and } \xi = 0,1,\ldots, I_O$}
  \Output{$\boldsymbol{\theta}$}

  sample a group of channels $\{\mathbf{h}_{\mathrm{d},k}^{[0]}, \mathbf{G}^{[0]}, \mathbf{h}_{\mathrm{r},k}^{[0]}\}$ \label{alg3_init1}\;
  randomly initialize $\boldsymbol{\theta}$\;
  initialize $\mathbf{W}$ with ZF beamforming\;
  calculate $u_k$ and $\lambda_k$ according to (\ref{equ_u_k}) and (\ref{equ_lambda_k})\;
  update $\mathbf{W}$ according to (\ref{equ_w_k})\;
  update $\boldsymbol{\theta}$ according to (\ref{equ_PI}) and trainable variable $\gamma^{(0)}$, denote the result as $\boldsymbol{\theta}^{(0)}$\;
  initialize $\mathbf{W}$ with GNN\;
  set $ i = 1$\;
  \While{\textbf{$i \leq I_O$} \label{alg2_iter0}}{
    sample a group of channels $\{\mathbf{h}_{\mathrm{d},k}^{[i]}, \mathbf{G}^{[i]}, \mathbf{h}_{\mathrm{r},k}^{[i]}\}$ \label{alg3_update1}\;
    calculate $u_k$ and $\lambda_k$ according to (\ref{equ_u_k}) and (\ref{equ_lambda_k})\label{alg3_update2}\;
    update $\mathbf{W}$ according to (\ref{equ_w_k}) and scale $\mathbf{W}$ according to the maximum transmit power constraint\label{alg3_update3}\;
    construct $\mathbf{B}^{(i)}$ according to (\ref{equ_B}) \label{alg3_update4}\;
    update $\tilde{\mathbf{B}}^{(i)}$ according to (\ref{equ_recursive_B}) and trainable variable $\rho^{(i)}$\label{alg3_update5}\;
    update $\bar{\boldsymbol{\theta}}^{(i)}$ according to (\ref{equ_PI}) and trainable variable $\gamma^{(i)}$\label{alg3_update6}\;
    update $\boldsymbol{\theta}^{(i)}$ according to (\ref{equ_recursive_theta}) and trainable variable $\delta^{(i)}$\label{alg3_update7}\;
    $i = i+1$\;
  }
  map $\theta_n$ to the nearest element in $\mathcal{F}$ (using an approximate quantization function, $Q(\boldsymbol{\theta})$, during training, and use a traditional quantization function during testing)\;
  \caption{WMMSE-PINet-ImCSI for Joint Beamforming with Imperfect CSI}
  \label{alg_WMMSE_PINet_Imperfect}
\end{algorithm}
\subsection{Model-driven DL WMMSE-PINet-ImCSI}
Drawing inspiration from the SSCA algorithm, we propose an extension to WMMSE-PINet$^+$ to handle the imperfect CSI scenario, referred to as ``WMMSE-PINet-ImCSI,'' and presented in Algorithm \ref{alg_WMMSE_PINet_Imperfect}. The initialization of WMMSE-PINet-ImCSI is similar to that of WMMSE-PINet$^+$, involving the sampling of a group of channels in Step \ref{alg3_init1}. However, in WMMSE-PINet-ImCSI, a new group of channels is sampled in Step \ref{alg3_update1} of each iteration to account for the presence of imperfect CSI. In Step \ref{alg3_update3}, the active beamforming vectors are updated before the passive beamforming because solving the short-term subproblem is a prerequisite for tackling the long-term problem. Furthermore, Steps \ref{alg3_update4} to \ref{alg3_update7} involve a different approach for updating the passive beamforming vectors compared to WMMSE-PINet$^+$ due to the imperfect estimation of the cascaded channels.

The introduction of two additional variables, $\{\rho^{(i)}, \delta^{(i)}\}$, in each iteration requires the expansion of the trainable variables of WMMSE-PINet-ImCSI, denoted as $\boldsymbol{\Gamma}' = \{\gamma^{(0)}, \gamma^{(1)}, \ldots, \gamma^{(I_O)}, \rho^{(1)}, \rho^{(2)}, \ldots, \rho^{(I_O)}, \delta^{(1)}, \delta^{(2)}, \ldots, $\\$\delta^{(I_O)},  f_{\mathrm{in}}, f_{\mathrm{agg}}^{(0)}, f_{\mathrm{com}}^{(1)}, f_{\mathrm{out}}\}$. Training WMMSE-PINet-ImCSI employs the average WSR as the loss function. More specifically, the output of WMMSE-PINet-ImCSI, $\boldsymbol{\theta}$, serves as input to a function that computes the average WSR. In this function, $\boldsymbol{\theta}$ is held constant while multiple groups of channels are sampled. The corresponding WSR for each group of channels is calculated using the WMMSE algorithm to obtain $\mathbf{W}$. Subsequently, the average WSR is computed based on the WSR values of the multiple channel groups.

\section{Simulation Results}
\label{Sec_Result}
\begin{figure}[tbp!]
  \centering
  \includegraphics[width=0.4\textwidth]{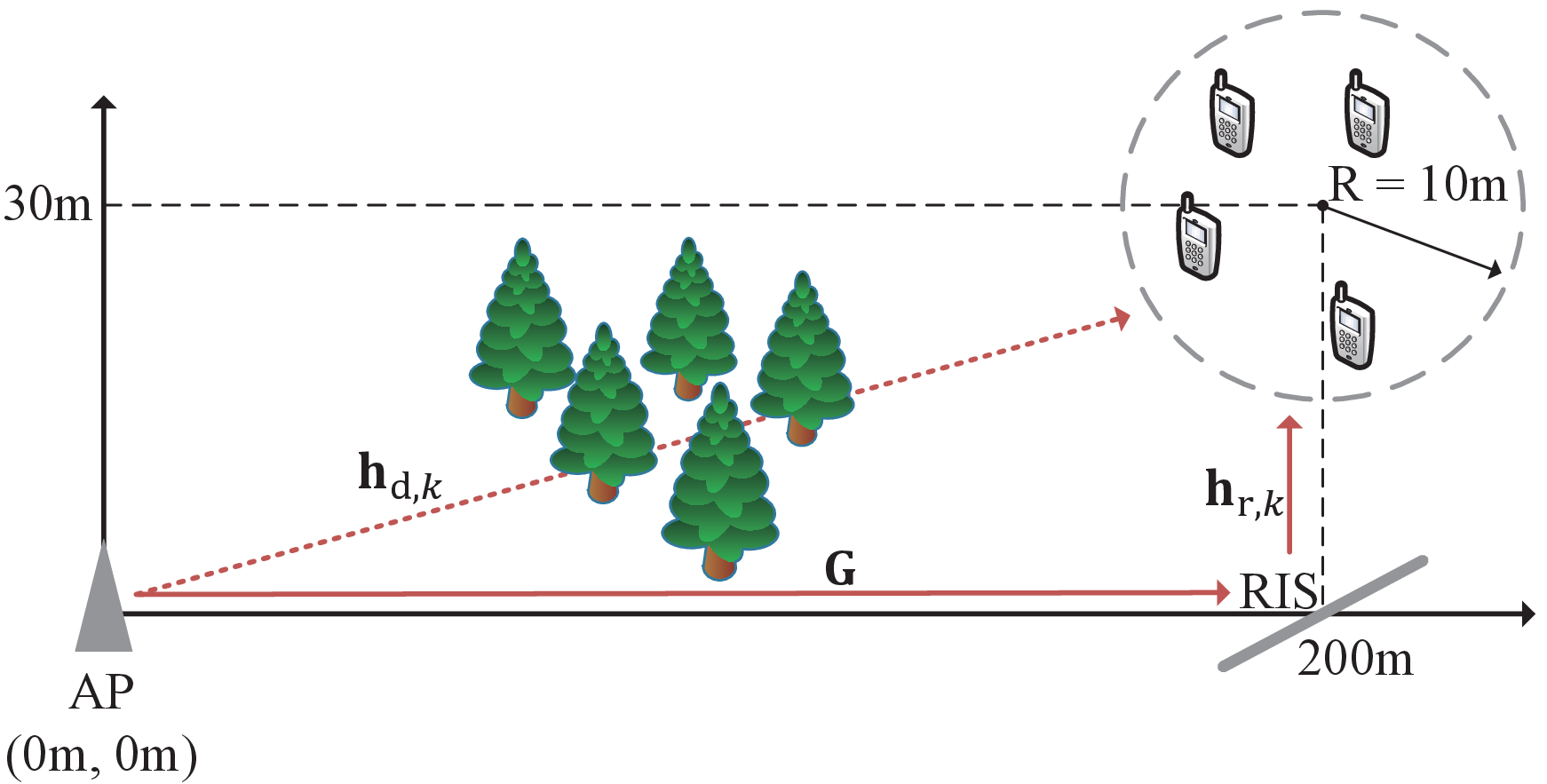}
  \caption{Simulated RIS-assisted downlink MU-MISO system.}
  \label{fig_simulation_env}
\end{figure}
\par
In this section, we present numerical results to illustrate the performance of the proposed algorithms. We begin by presenting specific details concerning the simulation parameters and the process of training the network. Subsequently, we conduct a comprehensive comparison between the performance of the proposed algorithms and that of other state-of-the-art methods across a range of system configurations.

\subsection{Simulation Details}
\par
We simulate a RIS-assisted downlink MU-MISO system as depicted in Fig.~\ref{fig_simulation_env}. Unless otherwise specified, the AP is equipped with $M = 8$ antennas to simultaneously serves $K = 4$ single-antenna users, and the RIS has $N = 100$ reflective elements, which can adjust phase shift continuously. The maximum transmit power is set as $P_\mathrm{T} = 10$ dBm. We assume AP and RIS are located at the origin and ($200\text{ m}$, $0\text{ m}$), respectively. The users are randomly distributed in a circle with a radius of $10\text{ m}$ and a center at ($200\text{ m}$, $30\text{ m}$). The simulation parameters are detailed in Table \ref{tab_parameter}, conforming to the specifications provided in \cite{guoWeightedSumRateMaximization2020} and adhering to the 3GPP propagation environment.

\begin{table}[tbp]
  \caption{Simulation Parameters}
  \label{tab_parameter}
    \centering
    \begin{tabular}{cc}
      \toprule
      Parameters & Values  \\
      \midrule
      Path-loss for $\mathbf{G}$ and $\mathbf{h}_{\mathrm{r},k}$ (dB) & $35.6+22.0$ $\lg \, d$      \\
      Path-loss for $\mathbf{h}_{\mathrm{d},k}$ (dB) & $32.6+36.7$ $\lg \, d$      \\
      Transmission bandwidth & $180$ kHz \\
      Noise power spectral density & $-170$ dBm/Hz \\
      \bottomrule
    \end{tabular}
\end{table}

The channel between AP and users is assumed to be blocked and modeled as Rayleigh fading. The channels from the AP to the RIS and from the RIS to users are postulated to comprise the LOS component and conform to Rician fading, which can be expressed as follows:
\begin{equation}
  \begin{aligned}
  \mathbf{G} &= L_{1}\left(\sqrt{\frac{\kappa}{\kappa+1}} \mathbf{G}_{\rm LOS} +\sqrt{\frac{1}{\kappa+1}} \mathbf{G}_{\rm NLOS} \right), \\
  \mathbf{h}_{\mathrm{r}, k} &= L_{2, k}\left(\sqrt{\frac{\kappa}{\kappa+1}} \mathbf{h}_{\mathrm{LOS, r}, k} +\sqrt{\frac{1}{\kappa+1}} \mathbf{h}_{\mathrm{NLOS, r}, k} \right).
  \end{aligned}
  \label{equ_rician_channel}
\end{equation}

In this equation, $L_1$ and $L_{2,k}$ represent the corresponding path-loss coefficients calculated in accordance with Table \ref{tab_parameter}. The Rician factor is assumed to be $\kappa = 10$. 
$\mathbf{G}_{\rm LOS}$ and $\mathbf{h}_{\mathrm{LOS, r}, k}$ pertain to the LOS components and are described using steering vectors. In contrast, the elements of $\mathbf{G}_{\rm NLOS}$ and $\mathbf{h}_{\mathrm{NLOS, r}, k}$ relate to the NLOS components and are characterized by a standard complex Gaussian distribution.

Throughout the transmission, we assume that the channel statistics for the aforementioned components characterized by the steering vector in (\ref{equ_rician_channel}) remain consistent and are perfectly estimated. Consequently, any channel estimation error is confined to the Rayleigh channels ${\mathbf{h}_{\mathrm{d}, k},  \mathbf{G}_{\rm NLOS}, \mathbf{h}_{\mathrm{NLOS, r}, k}}$. These channels require estimation in every frame.
 We express the corresponding estimated channels as:
	\begin{equation}
		\begin{aligned}
            \widehat{\mathbf{h}}_{\mathrm{d}, k} & = \sqrt{\frac{1}{\varrho+1}} \mathbf{h}_{\mathrm{d}, k} + \sqrt{\frac{\varrho}{\varrho+1}}\mathbf{z}_{\mathrm{d}, k},\\
            \widehat{\mathbf{G}}_{\rm NLOS} & = \sqrt{\frac{1}{\varrho+1}} \mathbf{G}_{\rm NLOS} + \sqrt{\frac{\varrho}{\varrho+1}}\mathbf{Z},\\ 
			\widehat{\mathbf{h}}_{\mathrm{NLOS, r},k} & = \sqrt{\frac{1}{\varrho+1}} \mathbf{h}_{\mathrm{NLOS, r}, k} + \sqrt{\frac{\varrho}{\varrho+1}}\mathbf{z}_{\mathrm{r},k}. 		
		\end{aligned}
	\end{equation}
Here, $\varrho$ represents the normalized mean square error (NMSE) of channel estimation and the smaller $\varrho$ indicates better channel estimation performance. The variables $\mathbf{z}_{\mathrm{d}, k}$, $\mathbf{Z}$, and $\mathbf{z}_{\mathrm{r},k}$ characterize the CSI error, with their elements following a standard complex Gaussian distribution\footnote{We employ a statistical CSI error model, which appears in the objective function as an expectation operation instead of being presented as a constraint, as observed in other works \cite{zhouFrameworkRobustTransmission2020, hongRobustTransmissionDesign2021}.}.
\par
To simply the simulation, received signal noise variances of users are assumed to be equal, that is, $\sigma_1^2 = \sigma_2^2 = \cdots = \sigma_K^2$. The priority for users is determined as the inverse of the path-loss for $\mathbf{h}_{\mathrm{d},k}$. The proposed WMMSE-PINet and WMMSE-PINet$^+$ are trained using the WSR as the loss function, and the proposed WMMSE-PINet-ImCSI is trained using the average WSR as the loss function.
\par
The training dataset and testing dataset contain $1,000$ LOS component samples, and the NLOS component of each sample is generated during the training and testing process. PyTorch, a DL-based architecture, is used for building and training the proposed algorithms, and the networks are trained by using the Adam optimizer. The batch size is set to $50$, and the number of training epochs is $20$. 

\subsection{Convergence Analysis}
\begin{figure}[tb!]
  \centering
  \includegraphics[width=0.45\textwidth, clip]{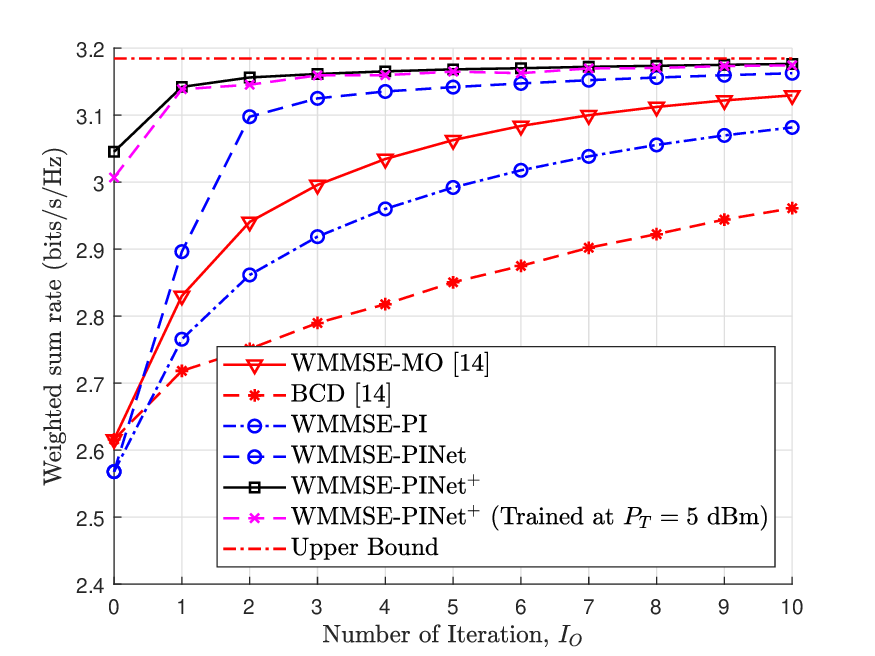}
  \caption{WSR versus the number of iterations, $I_O$.}
  \label{fig_convergence}
\end{figure}

\begin{figure*}[ht!]
  \centering
  \subfigure[]{
  \label{fig_robustness_a}
  \includegraphics[width=0.45\textwidth, clip]{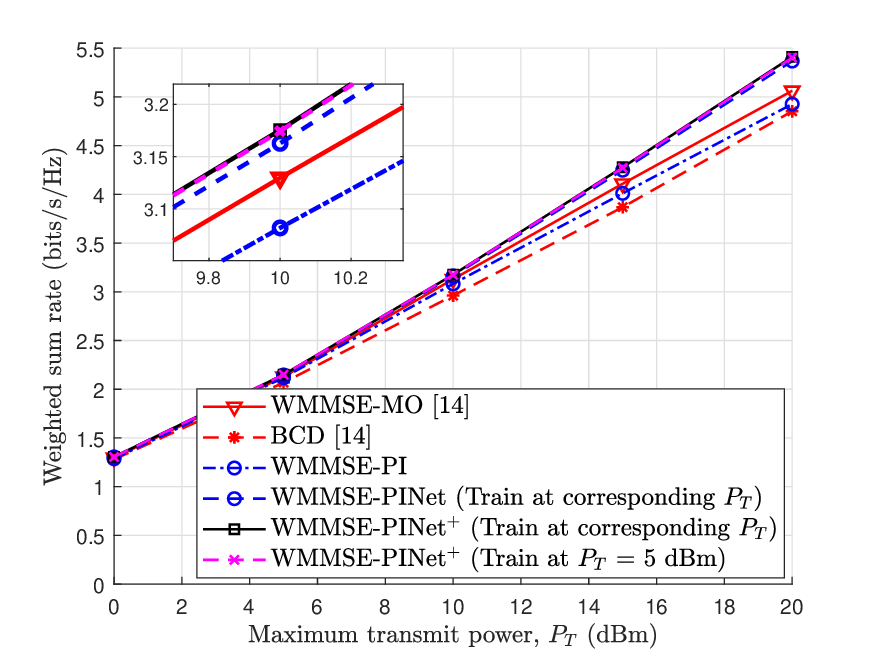}
  }
  \subfigure[]{
  \label{fig_robustness_b}
  \includegraphics[width=0.45\textwidth, clip]{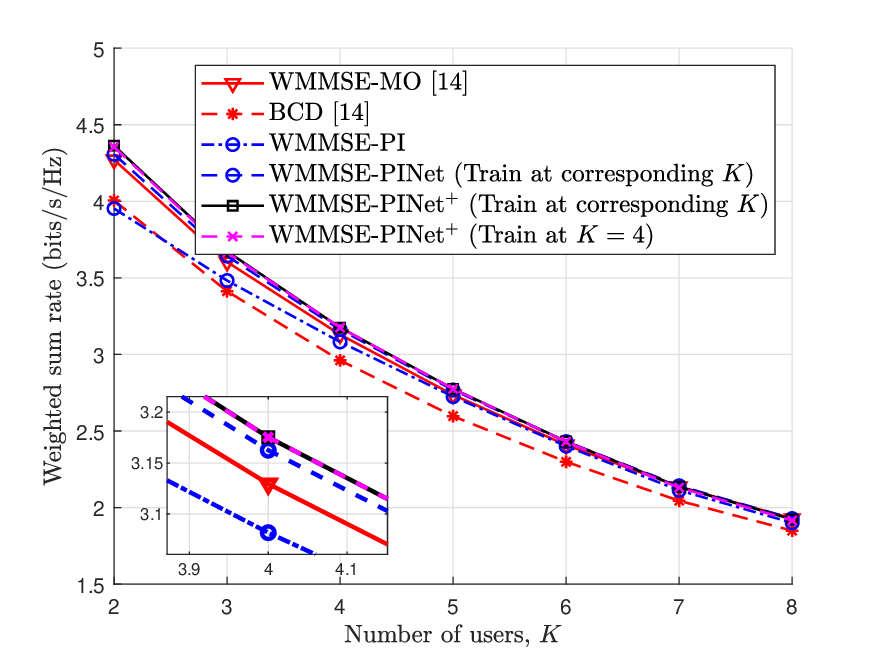}
  }
  \subfigure[]{
  \label{fig_robustness_c}
  \includegraphics[width=0.45\textwidth, clip]{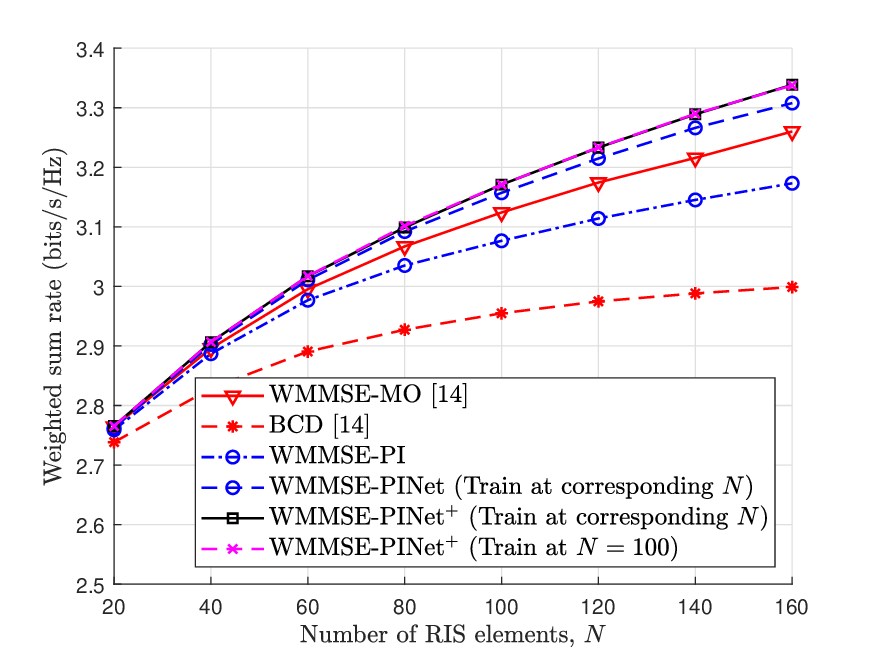}
  }
  \subfigure[]{
  \label{fig_robustness_d}
  \includegraphics[width=0.45\textwidth, clip]{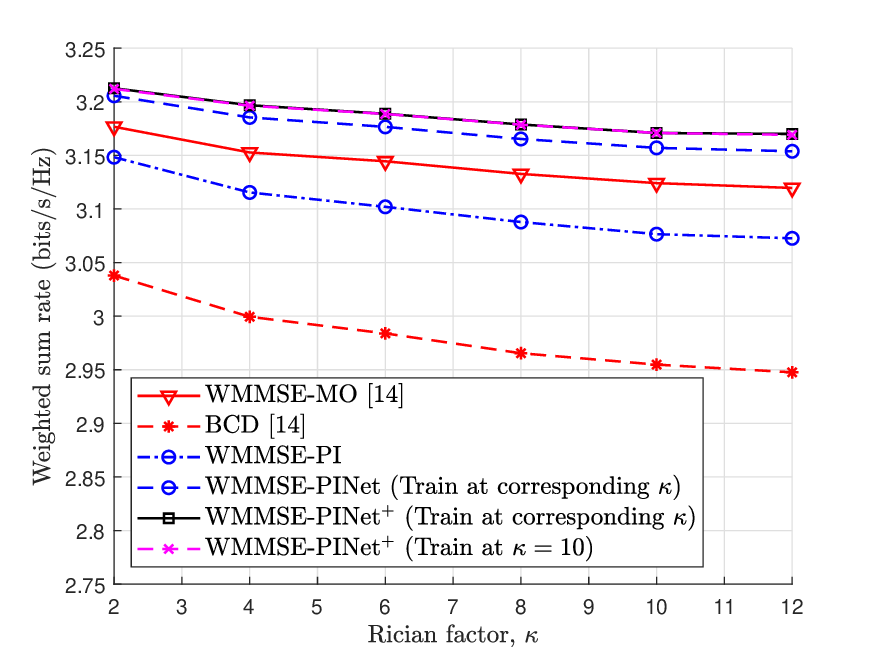}
  }
  \caption{Robustness with different system parameters, (a) maximum transmit power, $P_T$, (b) number of users, $K$, (c) number of RIS elements, $N$, and (d) Rician factor, $\kappa$.}
  \label{fig_robustness}
  \vspace{5pt}
\end{figure*}

In this subsection, we delve into the convergence analysis of the algorithms. Fig.~\ref{fig_convergence} presents the WSR plotted against the number of iterations, denoted as $I_O$. The proposed WMMSE-PI algorithm exhibits superior performance compared to the BCD algorithm. While the WMMSE-MO algorithm displays better performance than the WMMSE-PI algorithm, its complexity, as indicated in Table \ref{tab_complexity}, is notably higher. The WMMSE-PI algorithm strikes a favorable balance between performance and complexity within the scope of the three conventional algorithms.

With the integration of model-driven DL, WMMSE-PINet significantly elevates its performance, surpassing even the more intricate WMMSE-MO algorithm. Leveraging the GNN and the one-iteration PI algorithm for initialization, WMMSE-PINet$^+$ demonstrates further enhancements. The initialization performance is represented as the $0$-th iteration, and remarkably, the initialization of WMMSE-PINet$^+$ attains comparable results to the WMMSE-PI algorithm after $10$ iterations, substantially reducing the required iteration count for convergence.

By virtue of the model-driven DL incorporation, both WMMSE-PINet and WMMSE-PINet$^+$ achieve accelerated convergence rates, converging within merely $5$ iterations. WMMSE-PINet$^+$ yields relatively superior results, largely attributed to its more effective initialization. Notably, running the WMMSE-PI algorithm multiple times (e.g., 100 times) with varying random initializations and employing a sufficiently high number of iterations (e.g., $I_O = 100$) enables us to identify the best-performing solution among them, approximating the global optimal solution. This approach provides an estimate of the upper bound for joint beamforming design, with WMMSE-PINet$+$ with 10 iterations closely approaching this upper bound.

Additionally, the robustness of the proposed DL algorithms is tested by training WMMSE-PINet$^+$ at $P_T = 5$ dBm and evaluating it at $P_T = 10$ dBm. While the initialization performance is slightly affected, subsequent iterations reveal robustness, and the final performances of WMMSE-PINet$^+$ trained under different conditions exhibit similarities. For the ensuing tests, the number of iterations is standardized to $I_O = 10$ for all evaluated algorithms.

\subsection{Robustness Analysis}
\label{sec_robustness}

DL algorithms often exhibit diminished performance when confronted with disparities between training and testing environments. In order to assess the robustness of the proposed algorithms within the context of a RIS-assisted downlink MU-MISO system, we subject them to various scenarios encompassing distinct maximum transmit power $P_T$, numbers of users $K$, numbers of RIS elements $N$, and Rician factor $\kappa$. As depicted in Fig.~\ref{fig_robustness}, the proposed WMMSE-PINet$^+$ consistently outperforms other algorithms across all scenarios, leveraging the inherent advantages of the model-driven DL and the proposed WMMSE-PI algorithm.

Notably, a compelling observation arises when comparing the performance of WMMSE-PINet$^+$ trained at different $P_T$ values and specifically at $P_T = 5$ dBm. The similarity in performance implies a robustness towards variations in the maximum transmit power. This phenomenon can be attributed to the aptitude of WMMSE-PINet$^+$ in grasping the mapping function from channels to beamforming vectors, a task that is arguably accomplished more effectively than conventional data-driven DL approaches. This inherently enhances performance and robustness. Additionally, any slight performance deterioration in the GNN model due to disparities between training and testing environments is mitigated by subsequent iterations.

Similarly, the robustness of the proposed algorithm is substantiated with respect to fluctuations in the number of users, the quantity of RIS elements, and the Rician factor. Consequently, the proposed WMMSE-PINet$^+$ can maintain its performance without necessitating frequent retraining to ensure its efficacy.

\subsection{Phase Shifter Bits}
\begin{figure}[tb]
  \centering
  \includegraphics[width=0.45\textwidth, clip]{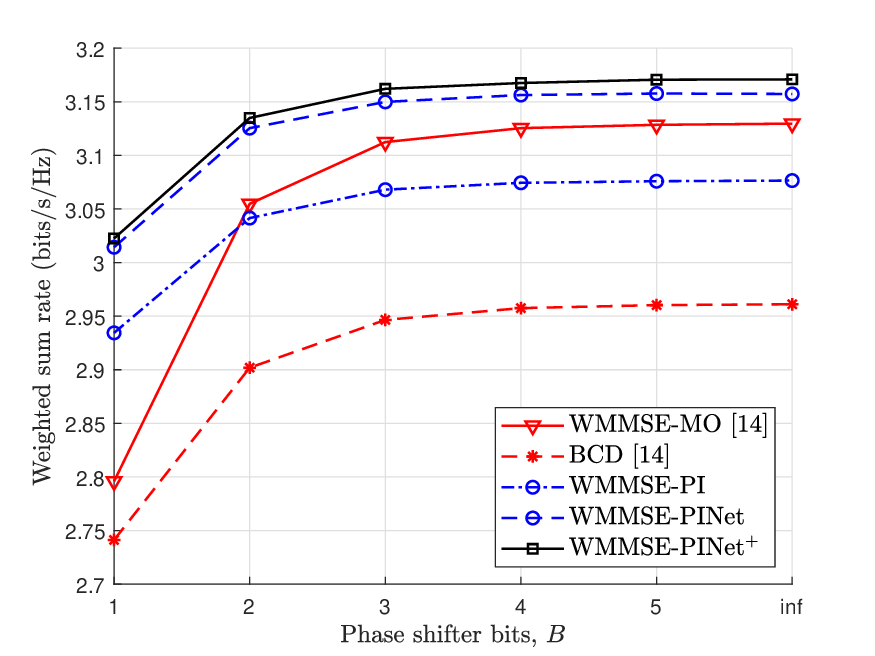}
  \caption{WSR versus the number of phase shifter bits, $B$.}
  \label{fig_Bits}
\end{figure}

This subsection examines the algorithm's performance under a practical scenario where the RIS's phase shifts are restricted to discrete sets. As shown in Fig.~\ref{fig_Bits}, the proposed WMMSE-PINet$^+$ consistently outperforms other algorithms across varying numbers of phase shifter bits. Notably, the WMMSE-PI algorithm also delivers commendable performance, even surpassing the WMMSE-MO algorithm in cases involving just 1-bit phase shifters. The x-label incorporates the value ``inf,'' indicating continuous adjustment of the phase shifter. Interestingly, as the number of phase shifter bits surpasses 3, the performance differences among the various algorithms converge. This observation aligns with the findings of \cite{hanLargeIntelligentSurfaceAssisted2019}, even in the context of our multi-user study. Consequently, it is apparent that an excessive number of phase shifter bits is unnecessary and would merely amplify the energy consumption of the RIS without commensurate benefits.

\subsection{Imperfect CSI}
\begin{figure}[tb]
  \centering
  \includegraphics[width=0.45\textwidth, clip]{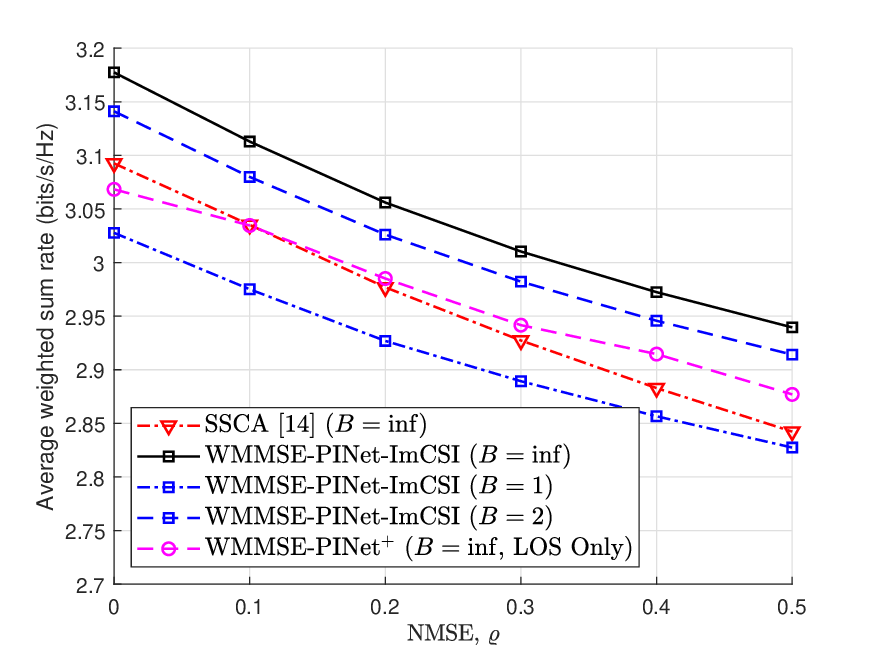}
  \caption{Average WSR versus the NMSE, $\varrho$.}
  \label{fig_ImCSI_varrho}
\end{figure}
\begin{figure}[tb]
  \centering
  \includegraphics[width=0.45\textwidth, clip]{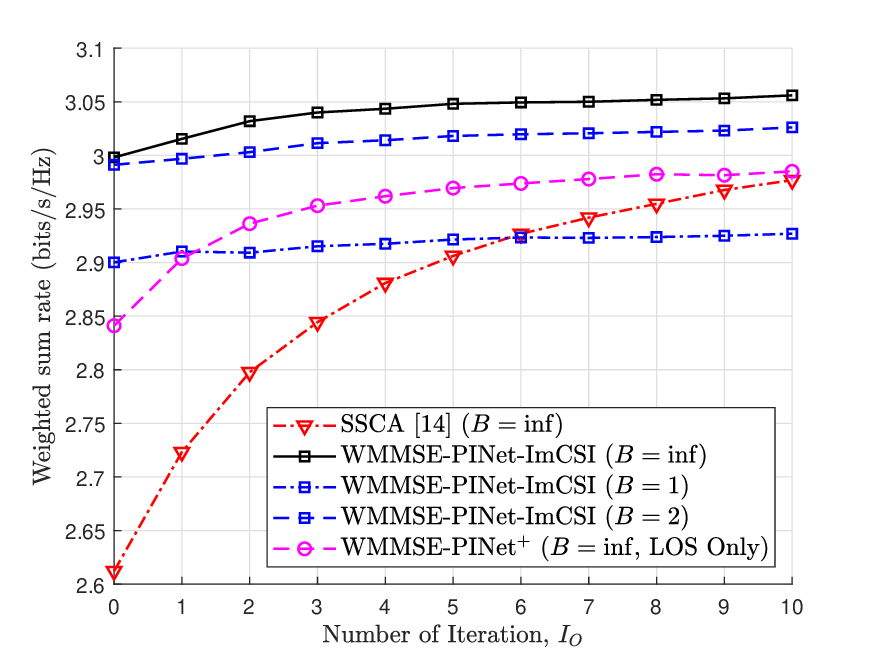}
  \caption{Average WSR versus the number of iterations, $I_O$, with a fixed NMSE of $\varrho = 0.2$.}
  \label{fig_ImCSI_convergence}
\end{figure}

This subsection assesses the performance of the proposed WMMSE-PINet-ImCSI algorithm in the presence of imperfect CSI. As depicted in Fig.~\ref{fig_ImCSI_varrho}, the attained average WSR declines as the NMSE of channel estimation, $\varrho$, increases---a predictable outcome. Notably, WMMSE-PINet-ImCSI with continuous phase shifts outperforms all tested algorithms. Remarkably, even WMMSE-PINet-ImCSI with 2-bit phase shifters exceeds the SSCA algorithm with continuous phase shifts, as proposed in \cite{guoWeightedSumRateMaximization2020}. This highlight the efficacy of our proposed method. As $\varrho$ rises, the performance gap between WMMSE-PINet-ImCSI ($B = 1$) and WMMSE-PINet-ImCSI ($B = \inf$) lessens, indicating that 1-bit phase shifters could improve energy efficiency under suboptimal channel estimation or rapidly fluctuating channels. 

Furthermore, we explore the convergence of the proposed algorithm under imperfect CSI conditions, as shown in Fig.~\ref{fig_ImCSI_convergence}. With the NMSE of channel estimation set at $\varrho = 0.2$, WMMSE-PINet-ImCSI ($B=1$) surpasses the SSCA algorithm ($B = \inf$) in the initial five iterations, illustrating swift convergence due to the initialization with GNN. Consequently, employing 1-bit phase shifters for five iterations can significantly reduce the complexity and hardware requirements of the algorithm.

Given that the NLOS component follows a circularly symmetric complex Gaussian distribution with a mean of zero, its impact on joint beamforming can be mitigated through the expectation operator. Further testing reveals that even when considering only the LOS components of the channel as input (replacing the relevant parts in (\ref{equ_rician_channel}) in the passive beamforming update instead of $\mathbf{G}$ and $\mathbf{h}_{\mathrm{r},k}$), WMMSE-PINet$^+$ still outperforms the SSCA algorithm. Remarkably, WMMSE-PINet-ImCSI ($B = \inf$) and WMMSE-PINet$^+$ ($B = \inf$, LOS only) also exhibit promising performance, with the former benefiting from training considerations under imperfect CSI. However, the channel statistics are influenced by angular parameters and distance, which can be readily obtained using GPS and geometrical relationships. Consequently, channel estimation overhead is significantly reduced for WMMSE-PINet$^+$ (LOS only) since cascaded channels are not necessary in this scenario.

\section{Conclusion}
\label{Sec_Conclusion}

In this study, we have developed efficient joint beamforming techniques for RIS-assisted downlink MU-MISO systems. By splitting the joint beamforming design into active and passive components, we achieved a more manageable optimization problem. The active beamforming vectors were optimized using the WMMSE algorithm, while the passive beamforming vector was iteratively updated using the PI algorithm. To further enhance performance and speed up convergence, we introduced trainable variables and leveraged a GNN to improve the initialization of active beamforming vectors. This improved initialization reduced the required number of iterations for convergence, significantly lowering computational complexity.

We also extended our proposed algorithm to address scenarios with imperfect CSI. Using a two-timescale stochastic optimization approach, we aimed to maximize the average WSR of the system. Our simulation results demonstrated that the proposed algorithm achieved substantial reductions in computational time, performing at less than 3\% of the computational cost of state-of-the-art algorithms. Furthermore, our algorithm exhibited robustness across various system configurations, making it adaptable to changing conditions. Even when not estimating the cascaded channels, our proposed approach outperformed existing algorithms, showcasing its potential for practical deployment.

\appendix
\subsection{Proof of Lemma 1}
\label{Appendix_A}
\par
Based on the definition, we have
\begin{equation}
  \begin{aligned}
    \Re\left\{\left(\boldsymbol{x}^{(p+1)}\right)^H\mathbf{R}\boldsymbol{x}^{(p)}\right\} & =\max _{\left|x_{n}\right|=1} \Re \{\boldsymbol{x}^{H} \mathbf{R}\boldsymbol{x}^{(p)}\}\\
  & \geq \left(\boldsymbol{x}^{(p)}\right)^{H} \mathbf{R x}^{(p)}.
  \end{aligned}
\end{equation}
When $\boldsymbol{x}^{(p+1)}\neq \boldsymbol{x}^{(p)}$ and $\mathbf{R}$ is positive definite, we can obtain
\begin{equation}
  \left(\boldsymbol{x}^{(p+1)} - \boldsymbol{x}^{(p)}\right)^H \mathbf{R}\left(\boldsymbol{x}^{(p+1)} - \boldsymbol{x}^{(p)}\right) > 0.
\end{equation}
Thus, the objective function monotonically increases because
\begin{equation}
  \begin{aligned}
    & \left(\boldsymbol{x}^{(p+1)}\right)^{H} \mathbf{R}\boldsymbol{x}^{(p+1)} \\
    > & 2\Re\Bigl\{\left(\boldsymbol{x}^{(p+1)}\right)^{H} \mathbf{R}\boldsymbol{x}^{(p)}\Bigr\} - \left(\boldsymbol{x}^{(p)}\right)^{H} \mathbf{R}\boldsymbol{x}^{(p)}\\
    > & \left(\boldsymbol{x}^{(p)}\right)^{H} \mathbf{R}\boldsymbol{x}^{(p)}.
  \end{aligned}
\end{equation}
Moreover, the objective function is upper-bounded by
\begin{equation}
  \begin{aligned}
  \boldsymbol{x}^H \mathbf{Rx} & = \sum_{i = 1}^{N+1}\sum_{j = 1}^{N+1} x_i^* R_{i,j} x_j \leq \sum_{i = 1}^{N+1}\sum_{j = 1}^{N+1} |x_i^* R_{i,j} x_j| \\
  & \leq \sum_{i = 1}^{N+1}\sum_{j = 1}^{N+1} |R_{i,j}|.
  \end{aligned}
\end{equation}
Therefore, the PI algorithm is guaranteed to converge to a stationary point $\bar{\boldsymbol{x}}$, which can be denoted as
\begin{equation}
  \mathbf{R}\bar{\boldsymbol{x}} = \operatorname{abs}(\mathbf{R}\bar{\boldsymbol{x}}) \circ e^{\jmath \operatorname{angle}(\mathbf{R}\bar{\boldsymbol{x}})} = \operatorname{abs}(\mathbf{R}\bar{\boldsymbol{x}}) \circ \bar{\boldsymbol{x}}.
\end{equation}
Additionally, we can prove $\operatorname{diag}(\operatorname{abs}(\mathbf{R}\bar{\boldsymbol{x}})) \succeq \mathbf{R}$ as
\begin{equation}
  \begin{aligned}
    &\boldsymbol{x}^H(\operatorname{diag}(\operatorname{abs}(\mathbf{R}\bar{\boldsymbol{x}})) - \mathbf{R})\boldsymbol{x}\\
    = & \boldsymbol{x}^H\operatorname{diag}(\operatorname{abs}(\mathbf{R}\bar{\boldsymbol{x}}))\boldsymbol{x} - \boldsymbol{x}^H \mathbf{R} \boldsymbol{x}\\
    = & \sum_{i = 1}^{N+1} \left|\sum_{j = 1}^{N+1} R_{i,j} x_j\right| - \sum_{i = 1}^{N+1}\sum_{j = 1}^{N+1} x_i^* R_{i,j} x_j\\
    \geq & \left|\sum_{i = 1}^{N+1} \sum_{j = 1}^{N+1} x_i^* R_{i,j} x_j\right| - \sum_{i = 1}^{N+1}\sum_{j = 1}^{N+1} x_i^* R_{i,j} x_j \geq 0.\\
  \end{aligned}
\end{equation}
Thus, $\bar{\boldsymbol{x}}$ is the local optimum of ${\bf P5}$ according to Appendix B of \cite{soltanalianDesigningUnimodularCodes2014}.



\end{document}